\documentclass[jmp,amsmath,amssymb]{revtex4-2}
\usepackage[T1]{fontenc}
\usepackage[latin9]{inputenc}
\usepackage{color}
\usepackage{mathrsfs}
\usepackage{amsmath}
\usepackage{amsthm}
\usepackage{amssymb}
\usepackage{graphicx}
\usepackage{esint}
\usepackage[bookmarks=true,bookmarksnumbered=true,bookmarksopen=true,bookmarksopenlevel=1,
 breaklinks=true,pdfborder={0 0 1},backref=section,colorlinks=true]
 {hyperref}
\hypersetup{
 bookmarksdepth=3,linkcolor=blue,citecolor=blue,urlcolor=blue}

\makeatletter
\theoremstyle{plain}
\newtheorem{thm}{\protect\theoremname}
\theoremstyle{remark}
\newtheorem{rem}[thm]{\protect\remarkname}
\theoremstyle{plain}
\newtheorem{lem}[thm]{\protect\lemmaname}

\usepackage{color}
\usepackage{mathrsfs}

\AtBeginDocument{%
  \hypersetup{
    colorlinks=true,
    linkcolor=blue,
    citecolor=blue,
    urlcolor=blue,
    bookmarks=true,
    bookmarksnumbered=true,
    bookmarksopen=true,
    bookmarksdepth=3,
    pdftitle={Factorized dispersion relations for two coupled systems},
    pdfauthor={Alexander Figotin},
    pdfsubject={Dispersion relations, coupled systems, Lagrangian field theory}
  }%
}

\providecommand{\definitionname}{Definition}
\providecommand{\lemmaname}{Lemma}
\providecommand{\remarkname}{Remark}
\providecommand{\theoremname}{Theorem}

\numberwithin{figure}{section}
\numberwithin{equation}{section}
\numberwithin{table}{section}
\@ifundefined{thm}{%
  \theoremstyle{plain}%
  \newtheorem{thm}{\protect\theoremname}%
}{}
\@ifundefined{rem}{%
  \theoremstyle{remark}%
  \newtheorem{rem}[thm]{\protect\remarkname}%
}{}
\@ifundefined{lem}{%
  \theoremstyle{plain}%
  \newtheorem{lem}[thm]{\protect\lemmaname}%
}{}
\@ifundefined{defn}{%
  \theoremstyle{definition}%
}{}

\makeatother

\providecommand{\lemmaname}{Lemma}
\providecommand{\remarkname}{Remark}
\providecommand{\theoremname}{Theorem}

\begin{document}
\title{Factorized dispersion relations for two coupled systems}
\author{Alexander Figotin}
\email{afigotin@uci.edu}

\affiliation{Department of Mathematics, University of California, Irvine, CA 92697,
USA}
\begin{abstract}
We establish that the dispersion relations of any physical system
composed of two coupled subsystems, governed by a space-time homogeneous
Lagrangian, admit a factorized form $G_{1}G_{2}=\gamma G_{\mathrm{c}}$,
where $G_{1}$ and $G_{2}$ are the subsystem dispersion functions,
$G_{\mathrm{c}}$ is the coupling function, and $\gamma$ is the coupling
parameter. The result follows from a determinant expansion theorem
applied to the block structure of the coupled system matrix, and is
illustrated through three examples: the traveling wave tube, vibrations
of an airplane wing, and the Mindlin-Reissner plate theory. For the
Mindlin-Reissner example we carry out a complete asymptotic analysis
of the coupled dispersion branches, establishing that the factorized
form provides a precise quantitative measure of mode hybridization:
all four branches carry the imprint of both subsystem factors for
any nonzero coupling, while asymptotically recovering the identity
of pure uncoupled modes at large frequencies and wavenumbers. We further
analyze the universal local geometry of the coupled dispersion branches
near their intersection --- the cross-point model --- showing it
is generically hyperbolic, and present a mechanical analog in which
the wavenumber is replaced by a scalar parameter, exhibiting the same
factorized structure and avoided crossing. 
\end{abstract}
\pacs{03.50.-z, 46.40.Cd, 62.30.+d, 46.70.De, 84.40.Fe.}
\keywords{Factorized dispersion relations, coupled systems, Lagrangian field
theory, traveling wave tube, avoided crossing, cross-point model,
Mindlin-Reissner plate, coupling parameter, hybridization of modes,
asymptotic analysis.}
\maketitle

\section{Introduction}

Dispersion relations --- equations relating frequency $\omega$ to
wavenumber $k$ --- are among the most fundamental objects in wave
physics. They encode the propagation properties of a physical system
and govern phenomena as varied as wave packet spreading, group velocity,
band gaps, and instabilities. The structure of dispersion relations
for systems composed of two interacting subsystems is therefore of
broad physical interest. Classical treatments of wave propagation
in elastic media \citet{Achen}, \citet{GerRix}, dispersive waves
\citet{Whith1}, \citet{Whith2}, and coupled-mode theory \citet{HausHua}
provide the conceptual background for the present work.

In our recent work \citet{FigFDT1} on the field theory of traveling
wave tubes (TWT) we discovered a physically appealing factorized form
of the dispersion relations, based on the fact that the TWT can be
viewed as two interacting coupled subsystems: the electron beam and
the metal wave-guiding structure containing it. The factorized form
$G_{1}G_{2}=\gamma G_{\mathrm{c}}$ expresses the dispersion relation
of the coupled system as a product of the two subsystem dispersion
functions perturbed by the coupling, where $G_{1}$ and $G_{2}$ are
the dispersion functions of the first and second subsystems respectively,
$G_{\mathrm{c}}$ is the coupling function, and $\gamma$ is the coupling
parameter. Since decomposition into two interacting subsystems arises
in a wide variety of physical contexts, it is natural to ask whether
this factorized structure is a general property of two-subsystem Lagrangian
field theories. The answer is affirmative, and the present paper establishes
this in full generality. A striking consequence of this factorized
structure, developed in detail for the Mindlin-Reissner plate example,
is that it provides a precise and quantitative account of mode hybridization
induced by coupling: every branch of the coupled dispersion relation
carries the imprint of both subsystem factors for any nonzero coupling,
and the degree of mixing is directly controlled by the coupling parameter.

The Lagrangian framework provides the natural setting for this investigation.
Physical systems furnished with Lagrangians depending on fields and
their partial derivatives over space-time, under the assumption of
space-time homogeneity, possess dispersion relations through the Fourier-domain
eigenvalue condition $\det\{\hat{\mathsf{L}}(k)\}=0$. When the system
Lagrangian admits a decomposition into two subsystem Lagrangians coupled
by a single coupling parameter $b$, the determinant condition factors
in a precise algebraic sense governed by Theorem \ref{thm:detAB}
below, which is based on Markus's determinant formula \citet{Markus}.
We note that related asymptotic approaches to factorizing plate dispersion
relations have been developed by Kaplunov and collaborators \citet{KapNolRog},
\citet{KapNob}, \citet{ChebKapRog}, \citet{AlzKapPri}, where polynomial
approximations of the Rayleigh-Lamb and Mindlin plate equations isolate
individual wave branches; the present approach differs in that the
factorization is exact and algebraic, derived directly from the Lagrangian
coupling structure. The avoided-crossing and cross-point phenomena
that arise in the coupled dispersion branches are closely related
to the non-crossing rule of quantum mechanics \citet{Noh} and to
coupled-mode theory \citet{HausHua}.

For physical systems that do not possess dispersion relations, a natural
substitute is the dependence of the system eigenfrequencies on a physical
parameter $p$, which we refer to as \emph{frequency-parameter relations}.
We show that such relations admit the same factorized structure for
two coupled systems, as illustrated explicitly in Section~\ref{subsec:mech-crpo}.

The paper is organized as follows. Section~\ref{sec:LagHi} provides
a concise review of the Lagrangian variational framework for fields
with higher-order derivatives and the associated dispersion relations,
establishing the notation used throughout. Section~\ref{sec:factdisp}
introduces the coupled two-subsystem framework, defines the coupling
parameter, and develops the factorized form of the dispersion relation,
with the key result given by Theorem~\ref{thm:detAB} on the determinant
of the coupled system matrix. Section~\ref{sec:examp} illustrates
the theory through three physically appealing examples: the traveling
wave tube (Section~\ref{subsec:twt}), vibrations of an airplane
wing (Section~\ref{subsec:wing}), and the Mindlin-Reissner plate
theory (Section~\ref{subsec:midreis}), including a detailed asymptotic
analysis of mode hybridization and a comparison with the classical
Kirchhoff plate theory. Section~\ref{sec:cross-po-model} introduces
the cross-point model as the universal local description of two coupled
dispersion branches near their intersection, derives the associated
hyperbolic geometry, constructs the Lagrangian underlying the cross-point
dispersion relation, presents a finite-dimensional mechanical analog
in which the wavenumber is replaced by a scalar parameter, and demonstrates
that hybridization is spatially concentrated near the cross-point
with the coupled branches recovering their individual mode character
asymptotically at large frequencies and wavenumbers. The Appendix
(Section~\ref{sec:append}) collects auxiliary material including
the Fourier transform conventions and the Markus determinant formula
used in the proofs.

\section{Review of Lagrangians with higher derivatives and the dispersion
relations}

\label{sec:LagHi}

We provide here a concise review of the Lagrangian variational framework
that involves higher derivatives following mostly \citet[Sec. 11]{GelFom}
and \citet[Chap. 1, Sec. 5,6]{GiaqHild} (see also \citet[Sec. 18]{Carath},
\citet[Sec. 33]{Hass}). Our motivation for considering Lagrangian
dependent on higher order partial derivatives is that some problems
of mechanics of continua related to bending and twisting involve the
second order derivatives, see for instance \citet[Sec. 2.1, 8.7, 8.8]{Langh}.

\subsection{The Lagrangian and the Euler equations}

\label{subsec:LagEu}

Suppose that conceivable configurations of the physical system are
described by a set of real-valued fields $u_{i}\left(x\right)$, $1\leq i\leq N$,
over space-time $\mathbb{R}^{n+1}$, that is 
\begin{equation}
u\left(x\right)=\left(u_{1}\left(x\right),\ldots,u_{N}\left(x\right)\right),\quad x\in\mathbb{R}^{n+1},\label{eq:varLag1a}
\end{equation}
where $N$ is the total number of field variables. In the cases of
interest the space-time vector $x\in\mathbb{R}^{n+1}$ represents
the space $\mathbb{R}^{n}$ so that 
\begin{equation}
x=\left(x_{0},x_{1},\cdots,x_{n}\right),\quad x_{0}=ct,\label{eq:varLag1b}
\end{equation}
where $x_{1},\cdots,x_{n}$ are Cartesian spatial coordinates, $t$
is time and constant $c$ is a ``natural'' to the system velocity.
In this setting $x_{0}$ represents the time variable. In the most
of physical applications of interest $n=1,2,3$.

We assume further that the physical system is furnished with its Lagrangian
$L$. Commonly the system Lagrangian $L$ depends on the relevant
fields $u_{i}\left(x\right)$ and their first order partial derivatives,
which are 
\begin{equation}
\partial_{j}u_{i}\left(x\right),\quad\partial_{j}=\frac{\partial}{\partial x_{j}},\quad0\leq j\leq n,\quad1\leq i\leq N.\label{eq:varLag1c}
\end{equation}
In mechanics of continua though the system Lagrangian $L$ may depend
may also depend on the partial derivatives of the second order. The
physical origins of the presence of the second order derivatives are
often bending and twisting, \citet[Sec. 2.1, 8.7, 8.8]{Langh}. So
to cover all cases of interest we allow the system Lagrangians to
be dependent on the higher order partial derivatives.

To deal with system Lagrangians that may be dependent on the higher
order partial derivatives we introduce notations for them using\emph{
multi-indices} $\mu$: 
\begin{gather}
\mu=\left(\mu_{0},\mu_{1},\ldots,\mu_{n}\right),\quad\mu_{j}=0,1,2,\ldots,\quad0\leq j\leq n;\quad\left|\mu\right|=\mu_{0}+\mu_{1}+\cdots+\mu_{n}.\label{eq:varLag1d}
\end{gather}
The corresponding partial derivatives $\partial_{\mu}$ then are defined
as follows: 
\begin{equation}
\partial_{\mu}\stackrel{\mathrm{def}}{=}\frac{\partial^{\left|\mu\right|}}{\partial x_{1}^{\mu_{1}}\cdots\partial x_{n}^{\mu_{n}}}=\prod_{j=0}^{n}\partial_{j}^{\mu_{j}},\label{eq:Fvarux1a}
\end{equation}
where $\left|\mu\right|$is the\emph{ order of the partial derivative}.
Note that for $\mu=\mathbf{0}=\left(0,\ldots,0\right)$ we have $\partial_{\mathbf{0}}u=u$.
Let us denote by $I_{L}$ the set of all multi-indices $\mu$ such
that the Lagrangian $L$ in equation (\ref{eq:Fvarux1a}) depends
on $\partial_{\mu}u_{i}\left(x\right)$ for at least one $i$.

Following the general variational setup procedure we consider a set
of real-valued variables

\begin{equation}
\left\{ v_{i\mu}\right\} \stackrel{\mathrm{def}}{=}\left\{ v_{i\mu}:i=1,\ldots,N,\;\mu\in I_{L}\right\} ,\label{eq:Fvarux1b}
\end{equation}
with index structure matching exactly the same for the partial derivatives
$\partial_{\mu}u_{i}\left(x\right)$. We introduce then: (i) the system
real-valued Lagrangian function $L\left(x,\left\{ v_{i\mu}\right\} \right)$
assuming that it is infinitely differentiable with respect to variables
$\left\{ v_{i\mu}\right\} $ (often it is just a polynomial function
of relevant variables); (ii) the corresponding \emph{action integral
$\mathscr{L}$} using substitution $v_{i\mu}=\partial_{\mu}u_{i}\left(x\right)$
in the \emph{Lagrangian function} $L\left(x,\left\{ v_{i\mu}\right\} \right)$:
\begin{equation}
\mathscr{L}\left(u\right)=\int_{\Omega}L\left(x,\left\{ \partial_{\mu}u_{i}\left(x\right)\right\} \right)\,\mathrm{d}x,\label{eq:Fvarux2a}
\end{equation}
where $\Omega\subseteq\mathbb{R}^{n+1}$ is an open domain in $\mathbb{R}^{n}$.
Then the extrema of the action integral $\mathscr{L}\left(u\right)$
satisfy the following \emph{Euler equations}: 
\begin{equation}
\sum_{\mu\in I_{L}}\left(-1\right)^{\left|\mu\right|}\partial_{\mu}\left[L_{v_{i\mu}}\left(x,\left\{ \partial_{\mu}u_{i}\left(x\right)\right\} \right)\right]=0,\quad i=1,\ldots,N,\label{eq:Fvarux2c}
\end{equation}
where 
\begin{equation}
L_{v_{i\mu}}\stackrel{\mathrm{def}}{=}\partial_{v_{i\mu}}L,\quad i=1,\ldots,N,\quad\mu\in I_{L}.\label{eq:Fvarux2b}
\end{equation}

A particularly important case when the Euler-Lagrange equations are
linear, that is when the Lagrangian $L$ is a quadratic function of
$\left\{ v_{i\mu}\right\} $, namely 
\begin{equation}
L\left(x,\left\{ v_{i\mu}\right\} \right)=\frac{1}{2}\sum_{i,j=1}^{N}\sum_{\mu,\eta\in I_{L}}a_{i\mu;j\eta}\left(x\right)v_{i\mu}v_{j\eta},\label{eq:Fvarux2d}
\end{equation}
where we may assume without loss of generality that 
\begin{equation}
a_{i\mu;j\eta}\left(x\right)=a_{j\eta;i\mu}\left(x\right),\quad i,j=1,\ldots N,\quad\mu,\eta\in I_{L}.\label{eq:Fvarux2e}
\end{equation}
In this case the partial derivatives $L_{v_{i\mu}}$ defined by equations
(\ref{eq:Fvarux2b}) take the form 
\begin{equation}
L_{v_{i\mu}}=\sum_{j=1}^{N}\sum_{\eta\in I_{L}}a_{i\mu;j\eta}\left(x\right)v_{j\eta},\label{eq:Fvarux3a}
\end{equation}
and then the corresponding Euler equations (\ref{eq:Fvarux2c}) turn
into 
\begin{equation}
\sum_{\eta\in I_{L}}\left(-1\right)^{\left|\eta\right|}\partial_{\eta}\left[\sum_{j=1}^{N}\sum_{\gamma\in I_{L}}a_{i\eta;j\gamma}\left(x\right)\partial_{\gamma}u_{j}\left(x\right)\right]=0,\quad i=1,\ldots,N.\label{eq:Fvarux3b}
\end{equation}
In the case when the system is time and space homogeneous with coefficients
$a_{i\eta;j\gamma}\left(x\right)=a_{i\eta;j\gamma}$ being independent
of $x$ constants the Lagrangian (\ref{eq:Fvarux3a}) and the Euler
equations (\ref{eq:Fvarux3b}) turn into the following respective
equations: 
\begin{equation}
L=L\left(\left\{ v_{i\mu}\right\} \right)=\frac{1}{2}\sum_{i,j=1}^{N}\sum_{\mu,\eta\in I_{L}}a_{i\mu;j\eta}v_{i\mu}v_{j\eta},\label{eq:Fvarux3c}
\end{equation}
\begin{equation}
\sum_{j=1}^{N}\mathsf{L}_{ij}\left(\partial\right)u_{j}\left(x\right)=0,\quad\mathsf{L}_{ij}\left(\partial\right)\stackrel{\mathrm{def}}{=}\sum_{\eta,\gamma\in I_{L}}\left(-1\right)^{\left|\eta\right|}a_{i\eta;j\gamma}\partial_{\gamma+\eta},\quad i=1,\ldots,N.\label{eq:Fvarux3d}
\end{equation}
Note that the Euler equations (\ref{eq:Fvarux3d}) are linear as a
consequence of the quadratic dependence of the Lagrangian $L$ on
$\left\{ v_{i\mu}\right\} $ according to equations (\ref{eq:Fvarux3c}).

The Euler equations (\ref{eq:Fvarux3d}) can be also recast into the
following matrix form: 
\begin{equation}
\mathsf{L}\left(\partial\right)u\left(x\right)=0,\quad\mathsf{L}\left(\partial\right)=\left\{ \mathsf{L}_{ij}\left(\partial\right)\right\} _{i,j=1,\ldots,N},\quad u\left(x\right)=[u_{1}\left(x\right),\cdots,u_{N}\left(x\right)]^{\mathrm{T}},\label{eq:Fvarux4a}
\end{equation}
where $L\left(\partial\right)$ is $N\times N$ matrix with each entry
being differential operator $\mathrm{L}_{ij}\left(\partial\right)$
defined by equations (\ref{eq:Fvarux3d}) and $u\left(x\right)$ is
a column vector.

\subsection{The dispersion relations}

\label{subsec:disp}

To use the well-known approach for analyzing time and space homogeneous
systems we consider vector functions $u\left(x\right)$ of the form
\begin{equation}
u\left(x\right)=\exp\left\{ \mathrm{i}\left(k_{0}x_{0}-\sum_{j=1}^{n}k_{j}x_{j}\right)\right\} \hat{u}\left(k\right),\quad k=\left(k_{0},k_{1},\cdots,k_{n}\right)\in\mathbb{R}^{n+1},\quad k_{0}=\frac{\omega}{c}.\label{eq:Fvarux4b}
\end{equation}
Substituting the above form vector function $u\left(x\right)$ into
equation (\ref{eq:Fvarux4a}) we obtain the following equation: 
\begin{equation}
\hat{\mathsf{L}}\left(k\right)\hat{u}\left(k\right)=0,\quad k=\left(k_{0},k_{1},\cdots,k_{n}\right)\in\mathbb{R}^{n+1},\quad k_{0}=\frac{\omega}{c},\label{eq:Fvarux4c}
\end{equation}
where $\hat{\mathsf{L}}\left(k\right)=\left\{ \hat{\mathsf{L}}_{ij}\left(k\right)\right\} $
is $N\times N$ matrix defined by 
\begin{equation}
\hat{\mathsf{L}}\left(k\right)=\left[\hat{\mathsf{L}}_{ij}\left(k\right)\right]_{i,j=1,\ldots,N},\quad k=\left(k_{0},k_{1},\cdots,k_{n}\right)\in\mathbb{R}^{n+1}\label{eq:Fvarux5a}
\end{equation}
\begin{equation}
\hat{\mathsf{L}}_{ij}\left(k\right)\stackrel{\mathrm{def}}{=}\sum_{\eta,\gamma\in I_{L}}\left(-1\right)^{\left|\eta\right|}a_{i\eta;j\gamma}\left(-1\right)^{k_{0}}\left(-\mathrm{i}\right)^{\left|k\right|}k^{\gamma+\eta},\quad i=1,\ldots,N,\label{eq:Fvarux5b}
\end{equation}
\begin{equation}
k^{\mu}\stackrel{\mathrm{def}}{=}\prod_{j=0}^{n}k_{j}^{\mu_{j}},\quad\left|k\right|=k_{0}+k_{1}+\cdots+k_{n}.\label{eq:Fvarux5c}
\end{equation}
Note that equation (\ref{eq:Fvarux4c}) can be viewed as a generalized
eigenvalue problem with $k_{0}=\frac{\omega}{c}$ being an eigenvalue
and nontrivial $\hat{u}\left(k\right)$ being a generalized eigenvector.
According to a well known statement from the linear algebra vector
equation (\ref{eq:Fvarux4c}) has a nontrivial (nonzero) solution
$\hat{u}\left(k\right)\neq0$ if and only if 
\begin{equation}
\det\left\{ \hat{\mathsf{L}}\left(k\right)\right\} =0,\quad k=\left(k_{0},k_{1},\cdots,k_{n}\right)\in\mathbb{R}^{n+1},\quad k_{0}=\frac{\omega}{c},\label{eq:Fvarux6a}
\end{equation}
and this equation can be viewed as the \emph{dispersion relation}
between $k_{0}=\frac{\omega}{c}$ and $\bar{k}=\left(k_{1},\cdots,k_{n}\right)\in\mathbb{R}^{n}$.
Note also that equation (\ref{eq:Fvarux6a}) relates the angular frequency
$\omega$ to angular wavevector $\bar{k}$ and it is a justification
for calling it the dispersion relation.

\section{Coupled systems and the factorized form of the dispersion relations}

\label{sec:factdisp}

Quite often a decomposition of given system $S$ into say two interacting
(coupled) subsystems $S_{1}$ and $S_{2}$ is rather clear based on
physical grounds. Nevertheless there could be alternative mathematical
formulations of a basis for such a decomposition. One physically sound
approach to the system decomposition is based on the system Lagrangian
assuming that it is available. Being given such a Lagrangian we first
split the relevant fields describing the system $S$ configuration
into two groups say 
\begin{equation}
U_{1}\left(x\right)=\left\{ u_{i}\left(x\right):1\leq i\leq m<N\right\} ,\quad U_{2}\left(x\right)=\left\{ u_{i}\left(x\right):1+m\leq i\leq N\right\} .\label{eq:UUcoup1a}
\end{equation}
We associate then fields $U_{1}\left(x\right)$ and $U_{2}\left(x\right)$
with respectively subsystems $S_{1}$ and $S_{2}$ and consider the
system Lagrangian decomposition into the sum 
\begin{equation}
L=L_{1}\left\{ \partial_{\mu}U_{1}\left(x\right)\right\} +L_{2}\left\{ \partial_{\mu}U_{2}\left(x\right)\right\} +L_{12},\label{eq:UUcoup1b}
\end{equation}
where Lagrangians $L_{1}$ and $L_{2}$ represent respectively subsystems
$S_{1}$ and $S_{2}$ and $L_{12}=L-L_{1}-L_{2}$ represents the interaction
between subsystems $S_{1}$ and $S_{2}$. We expect then that there
exists a parameter $b$ of the system $S$ that is involved in the
interaction Lagrangian $L_{12}$ so that if $b=0$ then $L_{12}=0$.
If that is the case we refer to such a parameter $b$ as a \emph{coupling
parameter.}

If we can not identify the desired coupling parameter $b$ but still
insist on having subsystems $S_{1}$ and $S_{2}$ as a basis of the
system $S$ decomposition we set up a family of Lagrangians 
\begin{equation}
L^{\left(b\right)}=L_{1}\left\{ \partial_{\mu}U_{1}\left(x\right)\right\} +L_{2}\left\{ \partial_{\mu}U_{2}\left(x\right)\right\} +bL_{12},\label{eq:UUcoup1c}
\end{equation}
where Lagrangians $L_{1}$, $L_{2}$ and $L_{12}$ are the same as
defined above. In other words we introduced an additional parameter
$b$ into the system Lagrangian. Then evidently the original Lagrangian
is recovered for $b=1$, and $b$ can be viewed as a coupling parameter
since for $b=0$ we have 
\begin{equation}
L^{\left(0\right)}=L_{1}\left\{ \partial_{\mu}U_{1}\left(x\right)\right\} +L_{2}\left\{ \partial_{\mu}U_{2}\left(x\right)\right\} ,\label{eq:UUcoup1d}
\end{equation}
indicating that the subsystems $S_{1}$ and $S_{2}$ are decoupled.

Hence without loss of generality we may assume that we always have
a coupling parameter associated with the decomposition of a given
system $S$ into say two coupled subsystems $S_{1}$ and $S_{2}$.
Having a natural to the system $S$ coupling parameter that can be
controlled by us is physically preferable of course. In this case
we can control the level of coupling experimentally.

\subsection{Setting up the coupled system}

\label{subsec:coupsys}

Rather often the physical systems at hand can be naturally decomposed
into two interacting, coupled subsystems. This situation can be specified
and quantified as follows.

Based on our studies of TWT systems and the factorized form of the
relevant dispersion relations we introduce here a general model for
factorized dispersion relations of a system composed of some two coupled
systems.

The dispersion relations emerge when we recast the original homogeneous
problem in the frequency-wavevector domain with $\omega$ being the
frequency and $k\in\mathbb{R}^{n}$ being the wavevector.

Suppose we have two initially non-interacting systems. Suppose also
that the systems are governed by linear evolution equations and such
that the corresponding eigenvalue problem for each of them can be
written in the following form: 
\begin{equation}
\varLambda_{j}Q_{j}=0,\quad\varLambda_{j}=\varLambda_{j}\left(k,\omega\right)\quad j=1,2,\label{eq:Lamj1a}
\end{equation}
where $\varLambda_{j}$ is a $n_{j}\times n_{j}$ square matrix and
$Q_{j}$ is a $n_{j}$ dimensional column vector for $j=1,2$. Then
the systems dispersion relations are 
\begin{equation}
\det\left\{ \varLambda_{j}\left(k,\omega\right)\right\} =0,\quad j=1,2.\label{eq:Lamj1b}
\end{equation}
Assume now that the two systems interact and the system composed of
these interacting subsystems is described by the following linear
problem: 
\begin{gather}
AQ=0,\quad A=\varLambda+B,\quad\varLambda=\varLambda\left(k,\omega\right)=\left[\begin{array}{rr}
\varLambda_{1}\left(k,\omega\right) & 0\\
0 & \varLambda_{2}\left(k,\omega\right)
\end{array}\right],\label{eq:Lamj1c}\\
B=B\left(k,\omega\right)=\left[\begin{array}{rr}
B_{11}\left(k,\omega\right) & B_{12}\left(k,\omega\right)\\
B_{21}\left(k,\omega\right) & B_{22}\left(k,\omega\right)
\end{array}\right],\quad Q=Q\left(k,\omega\right)=\left[\begin{array}{r}
Q_{1}\left(k,\omega\right)\\
Q_{2}\left(k,\omega\right)
\end{array}\right]\nonumber 
\end{gather}
where $B$ is referred to as \emph{coupling matrix} where submatrices
$B_{ij}$, $j=1,2$ may depend on $k$ and $\omega$.

It is convenient to modify the definition of coupling matrix $B$
by scaling it with a scalar real valued factor $b$. Consequently,
the eigenvalue problem (\ref{eq:Lamj1c}) turns into 
\begin{equation}
AQ=0,\quad A=A\left(b\right)=\varLambda+bB,\quad\varLambda=\left[\begin{array}{rr}
\varLambda_{1} & 0\\
0 & \varLambda_{2}
\end{array}\right],\quad B=\left[\begin{array}{rr}
B_{11} & B_{12}\\
B_{21} & B_{22}
\end{array}\right],\quad Q=\left[\begin{array}{r}
Q_{1}\\
Q_{2}
\end{array}\right].\label{eq:Lamj1d}
\end{equation}
Note then that matrix $A\left(b\right)$ defined by equations (\ref{eq:Lamj1d})
depends linearly on $b$ and $A\left(0\right)=\varLambda$ and $A\left(1\right)=\varLambda+B$.
In other words, $b=0$ corresponds to the case when subsystems are
completely decoupled and described by the matrix $\varLambda$ is
in equations (\ref{eq:Lamj1c}) whereas for $b=1$ we get the original
coupling matrix $B$ is in equations (\ref{eq:Lamj1c}).

Then the dispersion relations of the coupled system are consequently
\begin{equation}
\det\left\{ A\left(b\right)\right\} =\det\left\{ \varLambda\left(k,\omega\right)+bB\left(k,\omega\right)\right\} =0.\label{eq:Lamj1b-1}
\end{equation}

\subsection{Factorized form of the dispersion relation}

\label{subsec:fact_dispG}

The factorized dispersion relation assumes that the original system
is composed of two coupled (interacting) subsystems. Mathematical
representation of the coupling comes through a particular form of
the \emph{system matrix} $M_{k}\left(b\right)$ where $b$ is a scalar-valued
\emph{coupling coefficient}. Namely, we assume the system matrix $M_{k}\left(b\right)$
to be of the form 
\begin{equation}
M_{k}\left(b\right)=\varLambda+bB\left(b\right),\quad\varLambda=\left[\begin{array}{rr}
\varLambda_{1} & 0\\
0 & \varLambda_{2}
\end{array}\right]\quad B\left(b\right)=\left[\begin{array}{rr}
B_{11}\left(b\right) & B_{12}\left(b\right)\\
B_{21}\left(b\right) & B_{22}\left(b\right)
\end{array}\right].\label{eq:MkbAB1a}
\end{equation}
where matrix $B\left(b\right)$ is assumed to depend on $b$ polynomially.
Note that when the coupling coefficient $b=0$ then according to equations
(\ref{eq:MkbAB1a}) $M_{k}\left(0\right)=\varLambda$ where $\varLambda$
is a block-diagonal matrix. The fact that $\varLambda$ is block-diagonal
manifests the decomposition of the original system into two non-interacting
subsystems with respective system matrices $\varLambda_{1}$ and $\varLambda_{2}$.
The particular choice $bB\left(b\right)$ in equations (\ref{eq:MkbAB1a})
to represent the subsystems interaction is justified by two requirements:
(i) $B\left(b\right)$ depends on $b$ polynomially and (ii) the coupling/interaction
has to vanish as $b=0$.

We start off with the following implication of Markus's formula (\ref{eq:detAB1b})
for $\det\left\{ A+bB\left(b\right)\right\} $. 
\begin{thm}[determinant of the coupled systems matrix]
\label{thm:detAB} Let $A$ and $B\left(b\right)$ be two $n\times n$
matrices with $n\geq2$. Assume also that $b$ is a complex number
and matrix $B\left(b\right)$ depends on $b$ polynomially, that is
\begin{equation}
B\left(b\right)=\sum_{s=0}^{m}B^{\left(s\right)}b^{s},\label{eq:detAB1ab}
\end{equation}
where $m\geq0$ is an integer and $B^{\left(s\right)}$ are $n\times n$
matrices. Then $\det\left\{ A+bB\left(b\right)\right\} $ is a polynomial
function of $b$ satisfying the following representation 
\begin{equation}
\det\left\{ A+bB\left(b\right)\right\} =\det\left\{ A\right\} +\sum_{r=1}^{n-1}c_{r}\left(b\right)b^{r}+b^{n}\det\left\{ B\left(b\right)\right\} ,\label{eq:detAB1b}
\end{equation}
where 
\begin{equation}
c_{r}\left(b\right)=\sum_{\alpha,\beta\in Q_{n-r,n}}\left(-1\right)^{\left|\alpha\right|+\left|\beta\right|}\det\left\{ A\left[\alpha|\beta\right]\right\} \det\left\{ B\left(b\right)\left[\alpha^{c},\beta^{c}\right]\right\} ,\quad1\leq r\leq n-1,\quad n\geq2,\label{eq:detAB1c}
\end{equation}
Coefficient $c_{1}\left(b\right)$ satisfies the following representation
\begin{equation}
c_{1}\left(b\right)=\mathrm{tr}\,\left\{ A^{\mathrm{A}}B\left(b\right)\right\} ,\label{eq:detAB2a}
\end{equation}
where $A^{\mathrm{A}}$ is the adjugate to $A$ matrix defined by
equations (\ref{eq:adjug1a}). In the case when $A$ is a diagonal
matrix equations (\ref{eq:detAB2a}) the following representation
holds for $\mathrm{tr}\,\left\{ A^{\mathrm{A}}B\left(b\right)\right\} $
\begin{equation}
\mathrm{tr}\,\left\{ A^{\mathrm{A}}B\left(b\right)\right\} =\left[\sum_{i=1}^{n}\left(\prod_{j\neq i}A_{j,j}\right)B_{i,i}\left(b\right)\right].\label{eq:detAB2c}
\end{equation}
Equations (\ref{eq:detAB1b})-(\ref{eq:detAB2a}) readily imply 
\begin{equation}
\det\left\{ A+bB\left(b\right)\right\} =\det\left\{ A\right\} +\mathrm{tr}\,\left\{ A^{\mathrm{A}}B\left(0\right)\right\} b+O\left(b^{2}\right),\quad b\rightarrow0.\label{eq:detAB2ca}
\end{equation}
\end{thm}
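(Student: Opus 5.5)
The plan is to apply Markus's determinant formula for $\det\{A+C\}$, which the paper quotes in the Appendix, with the fixed matrix $C=bB(b)$. For two $n\times n$ matrices this formula reads
\[
\det\{A+C\}=\sum_{r=0}^{n}\sum_{\alpha,\beta\in Q_{n-r,n}}(-1)^{|\alpha|+|\beta|}\det\{A[\alpha|\beta]\}\det\{C[\alpha^{c}|\beta^{c}]\}.
\]
Here $Q_{n-r,n}$ is the set of increasing index sequences of length $n-r$, and the terms $r=0$ and $r=n$ are interpreted as $\det\{A\}$ and $\det\{C\}$ respectively. This is the Laplace-type expansion obtained by multilinearity of the determinant in its columns, so it may be taken as given.

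The key observation is that $C[\alpha^{c}|\beta^{c}]=b\,B(b)[\alpha^{c}|\beta^{c}]$ is an $r\times r$ matrix. Homogeneity of the determinant then gives
\[
\det\{C[\alpha^{c}|\beta^{c}]\}=b^{r}\det\{B(b)[\alpha^{c}|\beta^{c}]\}.
\]
Grouping the terms by $r$ produces exactly (\ref{eq:detAB1b}) with the coefficients $c_{r}(b)$ of (\ref{eq:detAB1c}): the $r=0$ term is $\det\{A\}$ and the $r=n$ term is $b^{n}\det\{B(b)\}$. Since each entry of $B(b)$ is a polynomial in $b$, every minor of $B(b)$ is a polynomial in $b$. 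Hence each $c_{r}(b)$, and therefore the whole determinant, is a polynomial in $b$.

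For (\ref{eq:detAB2a}), take $r=1$, so that $\alpha^{c}=\{j\}$ and $\beta^{c}=\{i\}$ are single indices. Then $\det\{B(b)[\alpha^{c}|\beta^{c}]\}=B_{j,i}(b)$. Deleting row $j$ and column $i$ from $A$ gives the minor $\det\{A[\alpha|\beta]\}$, and its sign $(-1)^{|\alpha|+|\beta|}$ equals $(-1)^{i+j}$. This holds because
\[
|\alpha|+|\beta|=n(n+1)-i-j\equiv i+j \pmod 2.
\]
So the product $(-1)^{i+j}\det\{A[\alpha|\beta]\}$ is the cofactor of $A_{j,i}$. With the adjugate defined as the transpose of the cofactor matrix, this cofactor is $(A^{\mathrm{A}})_{i,j}$. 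Therefore $c_{1}(b)=\sum_{i,j}(A^{\mathrm{A}})_{i,j}B_{j,i}(b)=\mathrm{tr}\{A^{\mathrm{A}}B(b)\}$.

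If $A$ is diagonal, then $A^{\mathrm{A}}$ is diagonal with entries $\prod_{j\neq i}A_{j,j}$, which gives (\ref{eq:detAB2c}) immediately.

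Finally, (\ref{eq:detAB2ca}) follows because all terms with $r\geq2$ carry a factor $b^{2}$ times a polynomial in $b$. The remaining linear term $b\,\mathrm{tr}\{A^{\mathrm{A}}B(b)\}$ differs from $b\,\mathrm{tr}\{A^{\mathrm{A}}B(0)\}$ by $b\,\mathrm{tr}\{A^{\mathrm{A}}(B(b)-B(0))\}$, which is $O(b^{2})$ since $B(b)-B(0)=O(b)$.

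The only real obstacle is bookkeeping. The sign convention $(-1)^{|\alpha|+|\beta|}$, with $|\alpha|$ the sum of indices, must be matched against the cofactor signs, and the adjugate must be handled with the correct transposition (so that $c_1$ is $\mathrm{tr}\{A^{\mathrm{A}}B\}$, not $\mathrm{tr}\{A^{\mathrm{A}}B^{\mathrm{T}}\}$). I would verify this against the Appendix's definition (\ref{eq:adjug1a}) and check the case $n=2$ by direct computation.
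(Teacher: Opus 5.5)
Your proof is correct and follows the paper's route. You apply Markus's formula (\ref{eq:detApB1}) with $bB(b)$ as the second matrix, pull out $b^{r}$ from each $r\times r$ minor by homogeneity, and identify the $r=1$ term with $\mathrm{tr}\{A^{\mathrm{A}}B(b)\}$ through the adjugate definition (\ref{eq:adjug1a}); your parity check $|\alpha|+|\beta|\equiv i+j \pmod 2$ and your reindexing from $Q_{r,n}$ to $Q_{n-r,n}$ make explicit details the paper leaves implicit.
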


\begin{proof}
Formula (\ref{eq:detAB1b}) for $\det\left\{ A+bB\left(b\right)\right\} $
follows straightforwardly from Markus's formula (\ref{eq:detApB1}).
As to equation (\ref{eq:detAB2a}) for $c_{1}\left(b\right)$ it is
verified by using (i) equation (\ref{eq:detAB1c}) for $r=1$ and
(ii) the definition (\ref{eq:adjug1a}) of adjugate matrix $A^{\mathrm{A}}$.
Finally, asymptotic formula (\ref{eq:detAB2ca}) follows readily from
equations (\ref{eq:detAB1b}) and (\ref{eq:detAB2a}). Note that for
any two $n\times n$ matrices $C$ and $D$ we have 
\begin{equation}
\mathrm{tr}\,\left\{ CD\right\} =\sum_{i=1}^{n}C_{i,j}D_{j,i}.\label{eq:detAB2d}
\end{equation}
Equation (\ref{eq:detAB2c}) follows from (i) the definition (\ref{eq:adjug1a})
of adjugate $A^{\mathrm{A}}$ applied to the special case of a diagonal
matrix $A$ and (ii) equation (\ref{eq:detAB2d}) applied for $C=A^{\mathrm{A}}$
and $D=B\left(b\right)$. 
\end{proof}
\begin{rem}[factorized dispersion relation]
\label{rem:factdisp2} Applying Theorem~\ref{thm:detAB} to the
system matrix $M_{k}\left(b\right)=\varLambda+bB\left(b\right)$ defined
in equation~(\ref{eq:MkbAB1a}), with $A=\varLambda$, yields the
dispersion relation $\det\left\{ M_{k}\left(b\right)\right\} =0$.
The key step connecting this to the factorized form $G_{1}G_{2}=\gamma G_{\mathrm{c}}$
is the observation that, since $\varLambda=\mathrm{diag}\left(\varLambda_{1},\varLambda_{2}\right)$
is block-diagonal, 
\begin{equation}
\det\left\{ \varLambda\right\} =\det\left\{ \varLambda_{1}\right\} \det\left\{ \varLambda_{2}\right\} =G_{1}G_{2},\label{eq:detLam12}
\end{equation}
where $G_{j}=\det\left\{ \varLambda_{j}\right\} $ is the dispersion
function of subsystem $S_{j}$, $j=1,2$. The remaining terms in the
expansion~(\ref{eq:detAB1b}) then play the role of $\gamma G_{\mathrm{c}}$,
so that the dispersion relation $\det\left\{ M_{k}\left(b\right)\right\} =0$
takes precisely the factorized form~(\ref{eq:GGgamG1b}). Thus Theorem~\ref{thm:detAB}
is the algebraic engine behind the factorization, and the block-diagonal
structure of $\varLambda$ is its physical driver. In particular,
the coupling parameter $b$ of the system matrix $M_{k}\left(b\right)$
plays the role of the coupling coefficient $\gamma$ in the factorized
dispersion relation~(\ref{eq:GGgamG1b}): setting $b=0$ decouples
the two subsystems and recovers $G_{1}G_{2}=0$, while increasing
$b$ from zero introduces the interaction term $\gamma G_{\mathrm{c}}$
that perturbs the product of the bare dispersion functions. 
\end{rem}

\section{Physically appealing examples}

\label{sec:examp}

We illustrate here the efficiency of our theory by implementing it
in a number of physically appealing examples.

\subsection{Traveling wave tube}

\label{subsec:twt}

The TWT-system Lagrangian $\mathcal{L}{}_{\mathrm{TB}}$ is defined
similarly to its expression in \citep[Chap. 4, 24]{FigTWTbk} with
the only difference that there is an additional term related to \emph{serial
capacitance} $C_{\mathrm{c}}$, namely 
\begin{gather}
\mathcal{L}{}_{\mathrm{TB}}=\mathcal{L}_{\mathrm{B}}+\mathcal{L}_{\mathrm{Tb}},\;\mathcal{L}_{\mathrm{B}}=\frac{1}{2\beta}\left(\partial_{t}q+\mathring{v}\partial_{z}q\right)^{2}-\frac{2\pi}{\sigma_{\mathrm{B}}}q^{2},\label{eq:LaTBq2a}\\
\mathcal{L}_{\mathrm{Tb}}=\frac{L}{2}\left(\partial_{t}Q\right)^{2}-\frac{1}{2C}\left(\partial_{z}Q+b\partial_{z}q\right)^{2}-\frac{1}{2C_{\mathrm{c}}}Q^{2},\nonumber 
\end{gather}
where $b$ is the so-called coupling constant which is a dimensionless
phenomenological parameter and other parameters are discussed in \citep[Chap. 4, 24]{FigTWTbk}.
Constant $b$ is assumed often to satisfy $0<b\leq1$ effectively
reducing the inductive input of the e-beam current into the shunt
current, see \citep[Chap. 3]{FigTWTbk} for more details. Note that
coupling between the GTL and e-beam is introduced through term $-\frac{1}{2C}\left(\partial_{z}Q+b\partial_{z}q\right)^{2}$indicating
that the GTL distributed shunt capacitance $C$ is shared with e-beam.
Following the developments in \citep[Chap. 4, 24]{FigTWTbk} we introduce
the \emph{TWT principal parameter} $\gamma$ defined by 
\begin{equation}
\gamma=\frac{b^{2}}{C}\beta=\frac{b^{2}}{C}\frac{\sigma_{\mathrm{B}}}{4\pi}\omega_{\mathrm{rp}}^{2},\quad\omega_{\mathrm{rp}}^{2}=R_{\mathrm{sc}}^{2}\frac{4\pi\mathring{n}e^{2}}{m}.\label{eq:LaBTq2b}
\end{equation}

The Euler-Lagrange (EL) equations corresponding to the Lagrangian
$\mathcal{L}{}_{\mathrm{TB}}$ defined by equations (\ref{eq:LaTBq2a})
are the following system of the second-order differential equations
\begin{gather}
L\partial_{t}^{2}Q-\frac{1}{C}\partial_{z}^{2}\left(Q+bq\right)+\frac{1}{C_{\mathrm{c}}}Q^{2}=0,\label{eq::LaTBq2c}\\
\frac{1}{\beta}\left(\partial_{t}+\mathring{v}\partial_{z}\right)^{2}q+\frac{4\pi}{\sigma_{\mathrm{B}}}q-\frac{b}{C}\partial_{z}^{2}\left(Q+bq\right)=0,\quad\beta=\frac{\sigma_{\mathrm{B}}}{4\pi}\omega_{\mathrm{rp}}^{2}.\label{eq:LaTBq2d}
\end{gather}
The Fourier transformation (see Appendix \ref{sec:four}) in time
$t$ and space variable $z$ of equations (\ref{eq::LaTBq2c}) and
(\ref{eq:LaTBq2d}) yields 
\begin{gather}
\left(\frac{k^{2}}{C}-\omega^{2}L+\frac{1}{C_{\mathrm{c}}}\right)\hat{Q}+k^{2}\frac{b}{C}\hat{q}=0,\label{eq:LaTBq3a-1}\\
\frac{bk^{2}}{C}\hat{Q}+\left\{ \frac{b^{2}k^{2}}{C}+\frac{4\pi}{\sigma_{\mathrm{B}}}\left[1-\frac{\left(\omega-\mathring{v}k\right)^{2}}{\omega_{\mathrm{rp}}^{2}}\right]\right\} \hat{q}=0,\label{eq:LaTBq3b-1}
\end{gather}
where functions $\hat{Q}=\hat{Q}\left(k,\omega\right)$ and $\hat{q}=\hat{q}\left(k,\omega\right)$
are the Fourier transforms of the system variables $Q\left(t,z\right)$
and $q\left(t,z\right)$. We will refer to equations (\ref{eq:LaTBq3a-1}),
(\ref{eq:LaTBq3b-1}) as\emph{ transformed EL equations}. The TWT-system
eigenmodes are naturally assumed to be of the form 
\begin{equation}
Q\left(z,t\right)=\hat{Q}\left(k,\omega\right)\mathrm{e}^{-\mathrm{i}\left(\omega t-kz\right)},\quad q\left(z,t\right)=\hat{q}\left(k,\omega\right)\mathrm{e}^{-\mathrm{i}\left(\omega t-kz\right)},\label{eq:LaTBq3c-1}
\end{equation}
where $\omega$ and $k=k\left(\omega\right)$ are the frequency and
the wavenumber, respectively.

Multiplying the EL equations (\ref{eq:LaTBq3a-1}), (\ref{eq:LaTBq3b-1})
by $C$ we can recast them into the following matrix form: 
\begin{equation}
M_{k\omega}x=0,\quad M_{k\omega}=\left[\begin{array}{rr}
k^{2}-\frac{\omega^{2}}{w^{2}}+\frac{C}{C_{\mathrm{c}}} & bk^{2}\\
bk^{2} & b^{2}k^{2}+\frac{4\pi C}{\sigma_{\mathrm{B}}}\left[1-\frac{\left(\omega-\mathring{v}k\right)^{2}}{\omega_{\mathrm{rp}}^{2}}\right]
\end{array}\right],\quad x=\left[\begin{array}{r}
\hat{Q}\\
\hat{q}
\end{array}\right].\label{eq:LaTBq4a}
\end{equation}
Note that equations (\ref{eq:LaTBq4a}) can be viewed as an eigenvalue
type problem for $k$ and $x$ assuming that $\omega$ and other parameters
are fixed.

Taking into account expressions 
\[
w=\frac{1}{\sqrt{CL}},\quad\omega_{\mathrm{c}}=wk_{\mathrm{c}}=\frac{1}{\sqrt{C_{\mathrm{c}}L}}
\]
for $w$ and $\omega_{\mathrm{c}}$ as well as expression (\ref{eq:LaBTq2b})
for the TWT principle parameter $\gamma$ we can rewrite equations
(\ref{eq:LaTBq4a}) as 
\begin{equation}
M_{k\omega}x=0,\quad M_{k\omega}=M_{k\omega}\left(b\right)=\left[\begin{array}{rr}
k^{2}+\frac{\omega_{\mathrm{c}}^{2}-\omega^{2}}{w^{2}} & bk^{2}\\
bk^{2} & b^{2}\left[k^{2}+\frac{\omega_{\mathrm{rp}}^{2}-\left(\omega-\mathring{v}k\right)^{2}}{\gamma}\right]
\end{array}\right],\quad x=\left[\begin{array}{r}
\hat{Q}\\
\hat{q}
\end{array}\right].\label{eq:LaTBq4b}
\end{equation}
Yet another equivalent form of equations (\ref{eq:LaTBq4b}) can be
obtained by using phase velocity $u=\frac{\omega}{k}$ instead of
wavenumber $k$ in equation (\ref{eq:LaTBq4b}), namely 
\begin{gather}
M_{u\omega}x=0,\quad M_{u\omega}=M_{u\omega}\left(b\right)=\left[\begin{array}{rr}
\frac{\omega^{2}}{u^{2}}+\frac{\omega_{\mathrm{c}}^{2}-\omega^{2}}{w^{2}} & \frac{b\omega^{2}}{u^{2}}\\
\frac{b\omega^{2}}{u^{2}} & \left[\frac{\omega^{2}}{u^{2}}+\frac{1}{\gamma}\left(\omega_{\mathrm{rp}}^{2}-\frac{\omega^{2}\left(u-\mathring{v}\right)^{2}}{u^{2}}\right)\right]b^{2}
\end{array}\right],\quad x=\left[\begin{array}{r}
\hat{Q}\\
\hat{q}
\end{array}\right],\label{eq:LaTBq5a}
\end{gather}
where we use once again the principal TWT parameter $\gamma=\frac{b^{2}}{C}\beta=\frac{b^{2}}{C}\frac{\sigma_{\mathrm{B}}}{4\pi}\omega_{\mathrm{rp}}^{2}$
defined by equations (\ref{eq:LaBTq2b}).

Note that matrices $M_{k\omega}\left(b\right)$ and $M_{u\omega}\left(b\right)$
satisfy the following factorized representation: 
\begin{equation}
M_{k\omega}\left(b\right)=D_{b}M_{k\omega}\left(1\right)D_{b},\quad M_{u\omega}\left(b\right)=D_{b}M_{u\omega}\left(1\right)D_{b}\quad D_{b}=\left[\begin{array}{rr}
1 & 0\\
0 & b
\end{array}\right],\label{eq:LaBTq5b}
\end{equation}
where matrices $M_{k\omega}\left(1\right)$ and $M_{u\omega}\left(1\right)$
evidently do not depend on $b$.

\subsection{Vibration of an airplane wing}

\label{subsec:wing}

A simplified one-dimensional model that accounts for vibrations of
an airplane is a beam with variable section properties and variable
mass distribution, \citet[Sec. 2.1, 8.6]{Langh}. When the wing vibrates,
the segment included between two neighboring cross-sectional planes
is displaced in its plane as a rigid \emph{lamina (thin layer, plate)}.
The displacement of the lamina can be described by rotation angle
$\theta$ about a chosen point $P$ and its translation motion. The
rotation $\theta$ is independent of the location of point $P$. It
is convenient to choose point $P$ to be the center of mass of the
lamina.

The wing model parameters and involved variables are as follows: 
\begin{itemize}
\item $x$ is the axial (horizontal) coordinate along the wing; 
\item $w=w\left(x\right)$ is the vertical displacement, $z$-axis of the
center of mass $P=P\left(x\right)$ of the lamina at $x$; 
\item $\theta=\theta\left(x\right)$ is the rotation (twisting, torsion)
of the lamina at $x$; 
\item $m=m\left(x\right)$ is the linear mass density of the lamina, that
is $m\,\mathrm{d}x$ is the mass of a lamina of thickness $\mathrm{d}x$; 
\item $I_{\mathrm{m}}=I_{\mathrm{m}}\left(x\right)$ is the linear density
of mass moment, that is $I_{\mathrm{m}}=I_{\mathrm{m}}\,\mathrm{d}x$
is inertia of the lamina of thickness $\mathrm{d}x$ about point $P$; 
\item $P^{\prime}=P^{\prime}\left(x\right)$ is the centroid of the lamina
defined in Remark \ref{rem:centro} below. Note that points $P$ and
$P^{\prime}$ ordinarily do not coincide; 
\item $I=I\left(x\right)$ is the linear density of the moment of inertia
of the cross-section of the structural parts of the wing about the
principal axis of inertia through point $P^{\prime}$, that is $I=I\,\mathrm{d}x$
is the corresponding moment of inertia of the lamina of thickness
$\mathrm{d}x$; 
\item $a=a\left(x\right)$ is the algebraic distance between the center
of mass $P$ and the centroid $P^{\prime}$ in the direction of $y$-axis
(orthogonal to $z$ and $x$ axes); 
\item $E$ is the Young module, $J$ is a the so called \emph{polar moment
of inertia}, $G$ is the \emph{shear modulus}, and the product $GJ$
is the called ``torsional stiffness'', \citet[Sec. 2.1, 8.6]{Langh}.
In general quantities $J$, $G$ and $GJ$ functions of $x$. For
a solid shaft of circular cross section or for a hollow shaft of annular
cross section, $J$ is the polar moment of inertia of the cross section
about its center. For any other shape of cross section, $J$ is less
than the polar moment of inertia; 
\item The vertical deflection of the centroid $P^{\prime}$ is approximately
$w+a\theta$. 
\end{itemize}
\begin{rem}[centroid]
\label{rem:centro} In mathematics and physics, the centroid, also
known as ``geometric center'' or ``center'' of figure, of a plane
figure or solid figure is the arithmetic mean position of all the
points in the figure. The same definition extends to any object in
$n$-dimensional Euclidean space. The centroid $C$ of a subset $S$
of $\mathbb{R}^{n}$ is defined as 
\begin{equation}
C=\frac{\int_{S}x\,\mathrm{d}x}{\int_{S}\,\mathrm{d}x}.\label{eq:vibwin1a}
\end{equation}
The centroid coincides with the center of mass or the center of gravity
only if the material of the body is homogeneous. A geometric centroidal
axis is an axis that passes through the centroid of a cross section.
The concept of centroid arises naturally in many areas of physics,
in particularly in fluid mechanics, namely the centroid of a body
is its buoyancy center, \citet[Sec. 2.8]{WhiteF}. 
\end{rem}

The kinetic energy density $T$ of the wing can be represented as
follows:

\begin{equation}
T=\frac{1}{2}\left[m\left(\partial_{t}w\right)^{2}+I_{m}\left(\partial_{t}\theta\right)^{2}\right].\label{eq:vibwin1b}
\end{equation}
By the elementary beam theory the strain energy density $U_{1}$ of
bending of the wing and the strain energy $U_{2}$ due to twisting
are 
\begin{equation}
U_{1}=\frac{1}{2}EI\left(\partial_{x}^{2}\left(w+a\theta\right)\right)^{2},\quad U_{2}=\frac{1}{2}GJ\left(\partial_{x}\theta\right)^{2}.\label{eq:vibwin1c}
\end{equation}
Consequently, the total strain energy density $U$ of the wing is
\begin{equation}
U=U_{1}+U_{2}=\frac{1}{2}EI\left(\partial_{x}^{2}\left(w+a\theta\right)\right)^{2}+\frac{1}{2}GJ\left(\partial_{x}\theta\right)^{2}.\label{eq:vibwin1d}
\end{equation}
In view of equations (\ref{eq:vibwin1b}) and (\ref{eq:vibwin1d})
we get the following expression for wing Lagrangian 
\begin{equation}
L=T-U=\frac{1}{2}\left[m\left(\partial_{t}w\right)^{2}+I_{m}\left(\partial_{t}\theta\right)^{2}-\frac{1}{2}EI\left(\partial_{x}^{2}\left(w+a\theta\right)\right)^{2}-\frac{1}{2}GJ\left(\partial_{x}\theta\right)^{2}\right],\label{eq:vibwin2a}
\end{equation}
where vertical displacement $w=w\left(x\right)$ and $\theta=\theta\left(x\right)$
is the rotation of the lamina the fields that determine the wing configuration.
Since we interested in the dispersion relations we have to assume
from now on that the wing parameters $\rho$, $I_{m}$, $E$, $I$,
$a$, $G$ and $J$ are constants independent of $x$.

We can clearly see from the expression (\ref{eq:vibwin2a}) for the
wing Lagrangian that the wing system is naturally composed of two
subsystems. The first subsystem depends on vertical displacement $w=w\left(x\right)$
and the second one depends on rotation $\theta=\theta\left(x\right)$.
Indeed, let us introduce a dimensionless coupling parameter $b$ and
the Lagrangian 
\begin{equation}
L^{(b)}=\frac{1}{2}\left[m\left(\partial_{t}w\right)^{2}+I_{m}\left(\partial_{t}\theta\right)^{2}-\frac{1}{2}EI\left(\partial_{x}^{2}\left(w+ba\theta\right)\right)^{2}-\frac{1}{2}GJ\left(\partial_{x}\theta\right)^{2}\right].\label{eq:vibwin2b}
\end{equation}
One can readily verify that Lagrangian $L^{(b)}$ can be decomposed
as follows: 
\begin{gather*}
L^{(b)}=L_{w}+L_{\theta}+\frac{1}{2}ba\partial_{x}^{2}\theta\left(ba\partial_{x}^{2}\theta+2\partial_{x}^{2}w\right)\\
L_{w}=\frac{1}{2}\left[m\left(\partial_{t}w\right)^{2}-\frac{1}{2}EI\left(\partial_{x}^{2}w\right)^{2}\right],\quad L_{\theta}=\frac{1}{2}\left[I_{m}\left(\partial_{t}\theta\right)^{2}-\frac{1}{2}GJ\left(\partial_{x}\theta\right)^{2}\right]
\end{gather*}

Note that 
\begin{equation}
L^{\left(1\right)}=L,\quad L^{\left(0\right)}=\frac{1}{2}\left[m\left(\partial_{t}w\right)^{2}-\frac{1}{2}EI\left(\partial_{x}^{2}w\right)^{2}\right]+\frac{1}{2}\left[I_{m}\left(\partial_{t}\theta\right)^{2}-\frac{1}{2}GJ\left(\partial_{x}\theta\right)^{2}\right].\label{eq:vibwin2c}
\end{equation}
For $b=1$ the Lagrangian $L^{(b)}$ turns into the original wing
Lagrangian $L$ and $b=0$ the Lagrangian $L^{(0)}$ evidently represents
two decoupled subsystems with Lagrangians that dependent respectively
on vertical displacement $w=w\left(x\right)$ and rotation $\theta=\theta\left(x\right)$.
This is an example of coupling between vertical displacement $w=w\left(x\right)$
and rotation $\theta=\theta\left(x\right)$.

The EL equations associated with Lagrangian (\ref{eq:vibwin2b}) are
as follows, \citet[Sec. 2.1, 8.6]{Langh}: 
\begin{gather}
m\partial_{t}^{2}w+\partial_{x}^{2}\left(EI\phi\right)=0,\quad\phi=\partial_{x}^{2}w+b\partial_{x}^{2}\left(a\theta\right),\label{eq:vibwin2d}\\
I_{m}\partial_{t}^{2}\theta+EI\partial_{x}^{2}\phi-\partial_{x}\left(GJ\partial_{x}\theta\right)-2\partial_{x}\left(EI\phi b\partial_{x}a\right)+\partial_{x}^{2}\left(EIba\phi\right)=0.\label{eq:vibwin2e}
\end{gather}
In particular in the case when all involved system parameters $E$,
$I$, $G$, $J$, $a$ and $I_{m}$ are constant the above EL equations
turn into 
\begin{gather}
m\partial_{t}^{2}w+EI\left(\partial_{x}^{4}w+ab\partial_{x}^{4}\theta\right)=0,\label{eq:vibwin3a}\\
I_{m}\partial_{t}^{2}\theta-GJ\partial_{x}^{2}\theta+EIba\left(\partial_{x}^{4}w+ba\partial_{x}^{4}\theta\right)=0.\label{eq:vibwin3b}
\end{gather}

To obtain the dispersion relations associated with the Euler-Lagrange
equations (\ref{eq:vibwin3a}), (\ref{eq:vibwin3b}) we consider the
system eigenmodes represented as follows: 
\begin{equation}
\theta\left(x,t\right)=\hat{\theta}\left(k,\omega\right)\mathrm{e}^{-\mathrm{i}\left(\omega t-kx\right)},\quad w\left(x,t\right)=\hat{w}\left(k,\omega\right)\mathrm{e}^{-\mathrm{i}\left(\omega t-kx\right)},\label{eq:vibwin3c}
\end{equation}
where $\omega$ and $k=k\left(\omega\right)$ are the frequency and
the wavenumber, respectively. The Fourier transformation (see Appendix
\ref{sec:four}) in time $t$ and space variable $x$ of the Euler-Lagrange
equations (\ref{eq:vibwin3a}), (\ref{eq:vibwin3b}) can be written
in the following matrix form: 
\begin{equation}
B_{b}X=0,\quad B_{b}=\left[\begin{array}{rr}
I_{m}\omega^{2}-\left(b^{2}a^{2}EIk^{2}+GJ\right)k^{2} & baEIk^{4}\\
baEIk^{4} & m\omega^{2}-EIk^{4}
\end{array}\right],\quad X=\left[\begin{array}{r}
\hat{\theta}\left(k,\omega\right)\\
\hat{w}\left(k,\omega\right)
\end{array}\right].\label{eq:vibwin3d}
\end{equation}
The above formula readily implies the following equation for the Taylor
series of matrix $B_{b}$ at $b=0$:

\begin{equation}
B_{b}=\left[\begin{array}{rr}
I_{m}\omega^{2}-GJk^{2} & 0\\
0 & m\omega^{2}-EIk^{4}
\end{array}\right]+baEIk^{4}\left[\begin{array}{rr}
0 & 1\\
1 & 0
\end{array}\right]-b^{2}a^{2}EIk^{4}\left[\begin{array}{rr}
1 & 0\\
0 & 0
\end{array}\right],\label{eq:vibwin3da}
\end{equation}
The dispersion relations associated with equations (\ref{eq:vibwin3d})
are 
\begin{equation}
\det\left\{ B_{b}\right\} =\left(I_{m}\omega^{2}-GJk^{2}\right)\left(m\omega^{2}-EIk^{4}\right)-b^{2}a^{2}EIk^{2}m\omega^{2}=0,\label{eq:vibwin3e}
\end{equation}
or equivalently 
\begin{equation}
\left(I_{m}\omega^{2}-GJk^{2}\right)\left(m\omega^{2}-EIk^{4}\right)=b^{2}a^{2}EIk^{2}m\omega^{2}.\label{eq:vibwin3f}
\end{equation}

\subsection{Mindlin-Reissner theory for plates}

\label{subsec:midreis}

The Mindlin-Reissner is a plate theory for rectangular and circular
plates of constant thickness, see original papers \citet{Reiss},
\citet{Mindlin} and a review paper \citet{Liew}. According to E.
Magrab the Mindlin-Reissner theory is an improved plate theory which
is ``the direct equivalent of using the Timoshenko beam theory as
an improved theory with respect to the Euler--Bernoulli beam theory.'',
\citet[Sec. 7.1]{Magrab}. We provide below a concise review of the
Mindlin-Reissner following mostly \citet[Sec. 7]{Magrab}, \citet[Sec. 12.3]{Leissa},
\citet[Sec. 14.9]{RaoVCS}, \citet[Chap. 10.1]{ReddyPS}. J. Reddy
refers to the Mindlin-Reissner plate theory as the first-order shear
deformation plate theory (FSDT), \citet[Chap. 10.1]{ReddyPS}. He
developed a more accurate third-order shear deformation plate theory
(TSDT), \citet[Chap. 10.3]{ReddyPS}.

Consider a rectangular plate of constant thickness $h$ whose top
and bottom surfaces are parallel to the $(x,y)$-plane with the coordinate
system located midway between these surfaces. The plate has a length
$a$ in the $x$-direction, a length $b$ in the $y$-direction. The
thickness $h$, which is in the $z$-direction, is such that $h\ll a$
and $h\ll b$. The plate has a \emph{density} $\rho$, a \emph{Young's
modulus} $E$ and a \emph{Poisson's ratio} $\nu$. Let $u$ and $v$
be respectively the in-plane displacements in the $x$ and $y$ directions
and $w$ be the transverse displacement in the $z$-direction. The
displacement $w$ is assumed to be independent of $z$ and the surfaces
of the plate are stress-free; that is $\sigma_{zz}=0$. As with the
Timoshenko beam, let us assume that the in-plane displacements are
proportional to the $z$ coordinate as follows, \citet[Sec. 8.3.1]{Graff},
\citet[Sec. 7.1.1]{Magrab}, 
\begin{equation}
u=z\psi_{x}\left(x,y,t\right),\quad v=z\psi_{y}\left(x,y,t\right),\quad w=w\left(x,y,t\right),\label{eq:mireco1a}
\end{equation}
where $\psi_{x}$ is the \emph{rotation of the cross section about
a line parallel to the $y$-axis} and $\psi_{y}$ is the rotation
of the cross section about a line parallel to the $x$-axis.

The Lagrangian $L$ for the Mindlin-Reissner theory is defined as
follows, \citet[Sec. 7.1.2]{Magrab}, \citet[Sec. 14.9.2]{RaoVCS},
\citet[Sec. 4.6]{Szil}: 
\begin{equation}
L=T-U,\quad T=\frac{\rho h}{2}\left[\frac{h^{2}}{12}\left(\partial_{t}\psi_{x}\right)^{2}+\frac{h^{2}}{12}\left(\partial_{t}\psi_{y}\right)^{2}+\left(\partial_{t}w\right)^{2}\right],\label{eq:mireco1b}
\end{equation}
\begin{gather}
U=\frac{D}{2}\left[\left(\partial_{x}\psi_{x}\right)^{2}+\left(\partial_{y}\psi_{y}\right)^{2}+2\nu\left(\partial_{x}\psi_{x}\right)\left(\partial_{y}\psi_{y}\right)+\frac{1-\nu}{2}\left(\partial_{y}\psi_{x}+\partial_{x}\psi_{y}\right)^{2}\right]\label{eq:mireco1c}\\
+\frac{\kappa hG}{2}\left[\left(\psi_{x}+\partial_{x}w\right)^{2}+\left(\psi_{y}+\partial_{y}w\right)^{2}\right],\nonumber 
\end{gather}
where $T$ is the kinetic energy density per unit of area and $U$
is the strain energy density per unit of area. Constant $\kappa$
that appears in expression (\ref{eq:mireco1c}) for the strain energy
$U$ is a \emph{shear correction coefficient} introduced for the Timoshenko
beam \citet[Sec. 5.2.1]{Magrab}. A typical value of shear correction
coefficient $\kappa$ is $\kappa=\frac{5}{6}$, \citet[Sec. 5.2.1]{Magrab}.
Constant $G$ is the \emph{shear modulus} and constant $D$ is the
\emph{flexural rigidity of the plate} defined as follows, \citet[Sec. 6.2.1, 7.1.2]{Magrab},
\citet[Sec. 5.1]{Langh}, \citet[ Sec. 4.4.5]{GerRix}: 
\begin{equation}
G=\frac{E}{2\left(1+\nu\right)},\quad D=\frac{Eh^{3}}{12\left(1-\nu^{2}\right)}.\label{eq:mireco1d}
\end{equation}

One recovers the Kirchhoff (classical) plate theory Lagrangian $L$
defined by equation (\ref{eq:LagLag1a}) from the Mindlin-Reissner
plate theory Lagrangian $L$ defined by equations (\ref{eq:mireco1b})
and (\ref{eq:mireco1c}) by (i) setting $\psi_{x}=-\partial_{x}w$,
$\psi_{y}=-\partial_{y}w$ in the strain energy density $U$ expression,
that is no shear strain contribution; (ii) removing terms involving
$\partial_{t}\psi_{x}$ and $\partial_{t}\psi_{y}$ from the kinetic
energy $T$ expression, that is no rotary motion contribution, \citet[Sec. 8.1.1]{Graff}.

The Euler-Lagrange equations corresponding Lagrangian $L$ defined
by equations (\ref{eq:mireco1b}) and (\ref{eq:mireco1c}) are, \citet[Sec. 7.1.3]{Magrab},
\citet[Sec. 14.9.2]{RaoVCS} : 
\begin{equation}
\frac{\rho h^{3}}{12}\partial_{t}^{2}\psi_{y}+\kappa hG\left(\psi_{y}+\partial_{y}w\right)-\frac{D}{2}\left[\left(1-\nu\right)\Delta\psi_{y}+\left(1+\nu\right)\partial_{y}\Phi\right]=0,\label{eq:mireco2a}
\end{equation}
\begin{equation}
\frac{\rho h^{3}}{12}\partial_{t}^{2}\psi_{x}+\kappa hG\left(\psi_{x}+\partial_{x}w\right)-\frac{D}{2}\left[\left(1-\nu\right)\Delta\psi_{x}+\left(1+\nu\right)\partial_{x}\Phi\right]=0,\label{eq:mireco2b}
\end{equation}
\begin{equation}
\rho h\partial_{t}^{2}w-\kappa hG\left(\Delta w+\Phi\right)=0,\quad\Phi=\partial_{x}\psi_{x}+\partial_{y}\psi_{y}.\label{eq:mireco2c}
\end{equation}
We will refer to the EL (\ref{eq:mireco2a})-(\ref{eq:mireco2c})
as Mindlin-Reissner plate equation or MR equations for short.

To construct a factorized form of the dispersion relations related
to the MR equations (\ref{eq:mireco2a})-(\ref{eq:mireco2c}) we would
like to embed Lagrangian $L$ into a family of Lagrangians $L_{b}$
where $b$ is real-valued parameter as follows: 
\begin{equation}
L_{b}=T-U_{b},\quad T=\frac{\rho h}{2}\left[\frac{h^{2}}{12}\left(\partial_{t}\psi_{x}\right)^{2}+\frac{h^{2}}{12}\left(\partial_{t}\psi_{y}\right)^{2}+\left(\partial_{t}w\right)^{2}\right],\label{eq:mireco3a}
\end{equation}
\begin{gather}
U_{b}=\frac{D}{2}\left[\left(\partial_{x}\psi_{x}\right)^{2}+\left(\partial_{y}\psi_{y}\right)^{2}+2\nu\left(\partial_{x}\psi_{x}\right)\left(\partial_{y}\psi_{y}\right)+\frac{1-\nu}{2}\left(\partial_{y}\psi_{x}+\partial_{x}\psi_{y}\right)^{2}\right]\label{eq:mireco3b}\\
+\frac{\kappa hG}{2}\left[\left(b\psi_{x}+\partial_{x}w\right)^{2}+\left(b\psi_{y}+\partial_{y}w\right)^{2}\right],\nonumber 
\end{gather}
Note that Lagrangian $L_{0}$ represents a system for which field
$w$ and fields $\psi_{x}$, $\psi_{y}$ don't interact and Lagrangian
$L_{1}$ is exactly Lagrangian $L$ for the Mindlin-Reissner theory
defined by equations (\ref{eq:mireco1b}) and (\ref{eq:mireco1c}).
These facts justifies naming $b$ a \emph{coupling parameter}. The
presence of coupling parameter $b$ in expressions for quantities
of interest is helpful in assessing the effect of interaction between
field $w$ and fields $\psi_{x}$, $\psi_{y}$ on those quantities.

The EL equations for Lagrangian $L_{b}$ are as follows: 
\begin{equation}
\frac{\rho h^{3}}{12}\partial_{t}^{2}\psi_{y}+\kappa hG\left(b\psi_{y}+\partial_{y}w\right)-\frac{D}{2}\left[\left(1-\nu\right)\Delta\psi_{y}+\left(1+\nu\right)\partial_{y}\Phi\right]=0,\label{eq:mireco4a}
\end{equation}
\begin{equation}
\frac{\rho h^{3}}{12}\partial_{t}^{2}\psi_{x}+\kappa hG\left(b\psi_{x}+\partial_{x}w\right)-\frac{D}{2}\left[\left(1-\nu\right)\Delta\psi_{x}+\left(1+\nu\right)\partial_{x}\Phi\right]=0,\label{eq:mireco4b}
\end{equation}
\begin{equation}
\rho h\partial_{t}^{2}w-\kappa hG\left(\Delta w+b\Phi\right)=0,\quad\Phi=\partial_{x}\psi_{x}+\partial_{y}\psi_{y}.\label{eq:mireco4c}
\end{equation}
Note that in the case of $b=1$ the EL equations (\ref{eq:mireco4a})-(\ref{eq:mireco4c})
are identical to the MR equations (\ref{eq:mireco2a})-(\ref{eq:mireco2c})
and from now on we refer to them as the Mindlin-Reissner equations.

To obtain the dispersion relations associated with the MR equations
(\ref{eq:mireco4a})-(\ref{eq:mireco4c}) we consider the system eigenmodes
represented as follows: 
\begin{gather}
w\left(x,y,t\right)=\hat{w}\left(k,\omega\right)\mathrm{e}^{-\mathrm{i}\left(\omega t-k_{x}x-k_{y}y\right)},\quad k=\left(k_{x},k_{y}\right),\label{eq:mireco5a}\\
\psi_{x}\left(x,y,t\right)=\hat{\psi_{x}}\left(k,\omega\right)\mathrm{e}^{-\mathrm{i}\left(\omega t-k_{x}x-k_{y}y\right)},\quad\psi_{y}\left(x,t\right)=\hat{\psi_{y}}\left(k,\omega\right)\mathrm{e}^{-\mathrm{i}\left(\omega t-k_{x}x-k_{y}y\right)},\nonumber 
\end{gather}
where $\omega$ and $k=k\left(\omega\right)$ are the frequency and
the wavenumber, respectively. The Fourier transformation (see Appendix
\ref{sec:four}) in time $t$ and space variables $x,y$ of the Euler-Lagrange
equations (\ref{eq:mireco4a})-(\ref{eq:mireco4c}) can be written
in the following matrix form: 
\begin{equation}
B_{b}X=0,\quad B_{b}\overset{\mathrm{def}}{=}\left[\begin{array}{rr}
\mathsf{A}_{b} & -\mathrm{i}b\kappa k_{y}hG\mathsf{k}\\
\mathrm{i}b\kappa hG\mathsf{k}^{\mathrm{T}} & h\left(\rho\omega^{2}-\kappa Gk^{2}\right)
\end{array}\right],\quad X\overset{\mathrm{def}}{=}\left[\begin{array}{r}
\hat{\Psi}\left(k,\omega\right)\\
\hat{w}\left(k,\omega\right)
\end{array}\right],\quad\hat{\Psi}\overset{\mathrm{def}}{=}\left[\begin{array}{r}
\hat{\psi_{y}}\\
\hat{\psi_{x}}
\end{array}\right],\label{eq:mireco5b}
\end{equation}
where $B_{b}$ is $3\times3$ is a Hermitian matrix, $\mathsf{A}$
is $2\times2$ matrix and $\mathsf{k}$ is $2\times1$ matrix (vector)
defined as follows: 
\begin{equation}
\mathsf{A}_{b}\overset{\mathrm{def}}{=}\left[\begin{array}{rr}
\frac{\rho h^{3}}{12}\omega^{2}-\kappa hGb^{2}-D\left(\frac{1-\nu}{2}k_{x}^{2}+k_{y}^{2}\right) & -\frac{D\left(1+\nu\right)}{2}k_{x}k_{y}\\
-\frac{D\left(1+\nu\right)}{2}k_{x}k_{y} & \frac{\rho h^{3}}{12}\omega^{2}-\kappa hGb^{2}-D\left(\frac{1-\nu}{2}k_{y}^{2}+k_{x}^{2}\right)
\end{array}\right],\quad\mathsf{k}\overset{\mathrm{def}}{=}\left[\begin{array}{r}
k_{y}\\
k_{x}
\end{array}\right].\label{eq:mireco5c}
\end{equation}

It turns out that matrix $B_{b}$ has a block diagonal form which
is as follows. Let us introduce the following orthonormal basis in
$\mathbb{R}^{3}$: 
\begin{equation}
\tau_{1}=\left[\begin{array}{r}
0\\
0\\
1
\end{array}\right],\quad\tau_{2}=\left[\begin{array}{r}
\frac{k_{x}}{k}\\
\frac{k_{y}}{k}\\
0
\end{array}\right],\quad\tau_{3}=\left[\begin{array}{r}
-\frac{k_{x}}{k}\\
\frac{k_{y}}{k}\\
0
\end{array}\right],\quad\left(\tau_{j},\tau_{m}\right)=\delta_{jm},\quad j,m=1\ldots3,\label{eq:mireco5ca}
\end{equation}
where $\left(\cdot,\cdot\right)$ is the scalar product in $\mathbb{C}^{3}$.
Then using vectors (\ref{eq:mireco5ca}) we define the following $3\times3$
matrix 
\begin{equation}
T_{k}\overset{\mathrm{def}}{=}\left[\tau_{1},\tau_{2},\tau_{3}\right]=\left[\begin{array}{rrr}
0 & \frac{k_{y}}{k} & -\frac{k_{x}}{k}\\
0 & \frac{k_{x}}{k} & \frac{k_{y}}{k}\\
1 & 0 & 0
\end{array}\right],\quad k=\sqrt{k_{x}^{2}+k_{y}^{2}}.\label{eq:mireco5cb}
\end{equation}
Note that vector $\tau_{2}$ corresponds to longitudinal (irrotational,
dilational) mode of oscillations, whereas vectors $\tau_{1}$ and$\tau_{3}$
represent transverse (equivoluminal, distortional) modes of oscillations.

It is straightforward to verify that for any $b$ Hermitian matrix
$B_{b}$ satisfies the following representation: 
\begin{equation}
B_{b}=T_{k}C_{b}T_{k}^{-1},\quad T_{k}^{-1}=T_{k}^{\mathrm{T}}=\left[\begin{array}{rrr}
0 & 0 & 1\\
\frac{k_{y}}{k} & \frac{k_{x}}{k} & 0\\
-\frac{k_{x}}{k} & \frac{k_{y}}{k} & 0
\end{array}\right],\quad k=\sqrt{k_{x}^{2}+k_{y}^{2}},\label{eq:mireco5d}
\end{equation}
where $C_{b}$ is the block-diagonal Hermitian matrix $3\times3$
defined by 
\begin{gather}
C_{b}\overset{\mathrm{def}}{=}\left[\begin{array}{rrr}
h\left(\rho\omega^{2}-\kappa Gk^{2}\right) & \mathrm{i}b\kappa hkG & 0\\
-\mathrm{i}b\kappa hkG & g\left(k,\omega\right) & 0\\
0 & 0 & f\left(k,\omega\right)
\end{array}\right],\label{eq:mireco5e}\\
g\left(k,\omega\right)\overset{\mathrm{def}}{=}\frac{\rho h^{3}}{12}\omega^{2}-Dk^{2}-b^{2}h\kappa G,\quad f\left(k,\omega\right)\overset{\mathrm{def}}{=}\frac{\rho h^{3}}{12}\omega^{2}-D\frac{\left(1-\nu\right)}{2}k^{2}-\kappa b^{2}hG.\label{eq:mireco5f}
\end{gather}
Note that equations (\ref{eq:mireco5ca}) and (\ref{eq:mireco5d})-(\ref{eq:mireco5f})
show that matrix $B_{b}$ can be block-diagonalized and that vector
$\tau_{3}$ is an eigenvector of matrix $B_{b}$, namely 
\begin{equation}
B_{b}\tau_{3}=f\left(k,\omega\right)\tau_{3}=\left(\frac{\rho h^{3}}{12}\omega^{2}-D\frac{\left(1-\nu\right)}{2}k^{2}-b^{2}\kappa hG\right)\tau_{3},\quad\tau_{3}=\left[\begin{array}{r}
-\frac{k_{x}}{k}\\
\frac{k_{y}}{k}\\
0
\end{array}\right].\label{eq:mireco5g}
\end{equation}
Note also that 
\begin{equation}
\left(B_{b}\tau_{j},\tau_{3}\right)=\left(\tau_{j},B_{b}\tau_{3}\right)=f\left(k,\omega\right)\left(\tau_{j},\tau_{3}\right)=0,\quad j=1,2,\label{eq:mireco5ga}
\end{equation}
implying that 
\begin{equation}
B_{b}\mathcal{T}\subseteq\mathcal{T},\quad\mathrm{\mathcal{T}=span}\,\left\{ \tau_{1},\tau_{2}\right\} ,\label{eq:mireco5gb}
\end{equation}
that is space $\mathcal{T}$is an invariant under action of matrix
$B_{b}$ subspace of $\mathbb{R}^{3}$. The latter is consistent with
equations (\ref{eq:mireco5d})-(\ref{eq:mireco5f}).

Equation of interest $B_{b}X=0$ in (\ref{eq:mireco5a}) has a nonzero
solution $X$ if and only if $\det\left\{ B_{b}\right\} =0$, and
the latter equation determines the dispersion relations associated
with the Mindlin-Reissner equations (\ref{eq:mireco4a})-(\ref{eq:mireco4c}).
A tedious but elementary analysis of equation $\det\left\{ B_{b}\right\} =\det\left\{ C_{b}\right\} =0$
and equations (\ref{eq:mireco5d})-(\ref{eq:mireco5f}) yield the
following \emph{factorized form of the dispersion relations for the
Mindlin-Reissner plate theory}: 
\begin{gather}
f\left(k,\omega\right)A\left(k,\omega\right)=0,\quad f\left(k,\omega\right)=\frac{\rho h^{3}}{12}\omega^{2}-D\frac{\left(1-\nu\right)}{2}k^{2}-b^{2}\kappa hG,\label{eq:mireco6a}\\
A\left(k,\omega\right)=\left(\frac{\rho h^{3}}{12}\omega^{2}-Dk^{2}\right)\left(\rho\omega^{2}-\kappa Gk^{2}\right)-b^{2}\kappa G\rho h\omega^{2}.\label{eq:mireco6b}
\end{gather}
Consequently, each pairs $\left(k,\omega\right)$ which is a solution
to the dispersion equations (\ref{eq:mireco6a}), (\ref{eq:mireco6b})
must be also a solution to at least one of the following equations:
\begin{equation}
f\left(k,\omega\right)=0\text{ or equivalently }\frac{\rho h^{3}}{12}\omega^{2}-D\frac{\left(1-\nu\right)}{2}k^{2}-b^{2}\kappa hG=0,\label{eq:mireco6c}
\end{equation}
\begin{equation}
A\left(k,\omega\right)=0\text{ or equivalently }\left(\frac{\rho h^{3}}{12}\omega^{2}-Dk^{2}\right)\left(\rho\omega^{2}-\kappa Gk^{2}\right)=b^{2}\kappa G\rho h\omega^{2}.\label{eq:mireco6d}
\end{equation}

It is instructive to consider the following alternative approach for
obtaining dispersion relations (\ref{eq:mireco6a})-(\ref{eq:mireco6b}).
Let us solve the first two equations of the system $B_{b}X=0$ for
fields $\hat{\psi_{y}}$ and $\hat{\psi_{x}}$ and obtain their representation
in terms of $\hat{w}$. If we then plug in the obtained expressions
into the third equation of the system $B_{b}X=0$ we find that the
resulting equation to be of the form $A\left(k,\omega\right)\hat{w}=0$,
where $A\left(k,\omega\right)$ is defined by equation (\ref{eq:mireco6b}).
The latter equation evidently has a nontrivial solution $\hat{w}$
if and only if $A_{w}\left(k,\omega\right)=0$. Repeating similar
developments for variables $\hat{\psi_{x}}$ and $\hat{w}$ and for
variables $\hat{\psi_{y}}$ and $\hat{w}$ we obtain equation $f\left(k,\omega\right)A\left(k,\omega\right)=0$
for the both cases. The described alternative approach for obtaining
the dispersion relations (\ref{eq:mireco6a})-(\ref{eq:mireco6b})
is consistent with somewhat different approach used by S. Rao, \citet[Sec. 14.9]{RaoVCS}.
Namely the author solves the MR equations (\ref{eq:mireco4a})-(\ref{eq:mireco4c})
for fields $\psi_{x}$, $\psi_{y}$ and $w$ for case of free vibrations
assuming the fields to be time-harmonic. Approach pursued by Rao in
\citet[Sec. 14.9.3]{RaoVCS} is based on a representations of fields
$\psi_{x}$, $\psi_{y}$ in terms of the Lame potentials that correspond
to the dilatation and shear components of motion of the plate, \citet[Sec. 7.3, 7.9]{ErinSuh}.

The dispersion equation (\ref{eq:mireco6d}), that is $A\left(k,\omega\right)=0$,
can be readily recast into the velocity dispersion relation, namely
\begin{equation}
\frac{1}{12}h^{2}k^{2}\left(1-\frac{c^{2}}{\kappa c_{\mathrm{T}}^{2}}\right)\left(\frac{c_{\mathrm{P}}^{2}}{c^{2}}-1\right)=b^{2},\quad c=\frac{\omega}{k},\label{eq:mireco7a}
\end{equation}
where $c_{\mathrm{L}}$ and $c_{\mathrm{T}}$ are respectively the
longitudinal and the transverse wave speeds for 3D homogeneous isotropic
elastic medium free of body forces defined as follows 
\begin{equation}
c_{\mathrm{L}}=\sqrt{\frac{\lambda+2G}{\rho}}=\sqrt{\frac{E\left(1-\nu\right)}{\rho\left(1+\nu\right)\left(1-2\nu\right)}},\quad c_{\mathrm{T}}=\sqrt{\frac{G}{\rho}}=\sqrt{\frac{E}{2\rho\left(1+\nu\right)}}.\label{eq:mireco7b}
\end{equation}
As to velocity $c_{\mathrm{P}}$ it is the velocity of the so-called
extensional waves in thin plates defined by the following formula,
\citet{Bish}, \citet[Sec. 2.7, 6.12.3]{Achen}, \citet[Sec. 7.3]{ErinSuh},
\citet[Sec. 4.3.2, 8.3.1]{Graff}: 
\begin{equation}
c_{\mathrm{P}}=\sqrt{\frac{E}{\rho\left(1-\nu^{2}\right)}}.\label{eq:mireco7c}
\end{equation}
Indeed, an analysis of the plane-stress problem for thin plates shows
that the passage from the plane-strain problem to the corresponding
plane-stress problem can be made by simply replacing the Lame constant
$\lambda$ with its modified value $\lambda^{\prime}$, namely, \citet{Bish},
\citet[Sec. 2.7, 6.12.3]{Achen}, \citet[Sec. 7.3, 7.9]{ErinSuh}:
\begin{equation}
\lambda^{\prime}=\frac{2G\lambda}{\lambda+2G}=\frac{\nu E}{1-\nu^{2}}.\label{eq:mireco8a}
\end{equation}
Using this modified value $\lambda^{\prime}$ of the Lame constant
in place of $\lambda$ and the expression for longitudinal wave speed
$c_{\mathrm{L}}=\sqrt{\frac{\lambda+2G}{\rho}}$ one obtains expression
(\ref{eq:mireco7c}) of the phase velocity $c_{\mathrm{P}}$, namely
\begin{equation}
c_{\mathrm{P}}=\sqrt{\frac{\lambda^{\prime}+2G}{\rho}}=\sqrt{\frac{E}{\rho\left(1-\nu^{2}\right)}}.\label{eq:mireco8b}
\end{equation}
Equations (\ref{eq:mireco7b}) and (\ref{eq:mireco7c}) readily imply
the following relations between velocities $c_{\mathrm{T}}$ and $c_{\mathrm{P}},$
\begin{equation}
\frac{c_{\mathrm{T}}^{2}}{c_{\mathrm{P}}^{2}}=\frac{1-\nu}{2},\label{eq:mireco7d}
\end{equation}

Note that dispersion relation (\ref{eq:mireco7a}) matches exactly
the velocity dispersion relation in \citet[Sec. 8.3.1]{Graff} for
the special case when $b=1$ corresponding to the Mindlin-Reissner
plate theory.

The dispersion relations (\ref{eq:mireco6c}) can be recast as the
\emph{velocity dispersion relation} as follows 
\begin{equation}
c=\frac{c_{\mathrm{T}}}{hk}\sqrt{h^{2}k^{2}+12\kappa b^{2}}=c_{\mathrm{T}}\sqrt{1+\frac{12\kappa b^{2}}{h^{2}k^{2}}},\quad c=\frac{\omega}{k}.\label{eq:mireco8c}
\end{equation}
Note that equation (\ref{eq:mireco8c}) implies the following asymptotic
formulas: 
\begin{equation}
c=c_{\mathrm{T}}\left[\frac{2b\sqrt{3\kappa}}{h}k^{-1}+\frac{h}{4b\sqrt{3\kappa}}k+O\left(k^{3}\right)\right],\quad c=\frac{\omega}{k},\quad k\rightarrow0,\label{eq:mireco8d}
\end{equation}
\begin{equation}
c=c_{\mathrm{T}}\left(1+\frac{6\kappa b^{2}}{h^{2}k^{2}}+O\left(k^{-4}\right)\right),\quad c=\frac{\omega}{k},\quad k\rightarrow\infty.\label{eq:mireco8e}
\end{equation}

It is instructive to consider a special case $b=0$ when fields $\psi_{x}$,
$\psi_{y}$ and $w$ are decoupled. In this case equations (\ref{eq:mireco5b}),
(\ref{eq:mireco5c}), (\ref{eq:mireco5e}) turn into 
\begin{equation}
B_{0}X=0,\quad B_{0}\overset{\mathrm{def}}{=}\left[\begin{array}{rr}
\mathsf{A}_{0} & 0\\
0 & h\left(\rho\omega^{2}-\kappa Gk^{2}\right)
\end{array}\right],\quad X\overset{\mathrm{def}}{=}\left[\begin{array}{r}
\hat{\Psi}\left(k,\omega\right)\\
\hat{w}\left(k,\omega\right)
\end{array}\right],\quad\hat{\Psi}\overset{\mathrm{def}}{=}\left[\begin{array}{r}
\hat{\psi_{y}}\\
\hat{\psi_{x}}
\end{array}\right],\label{eq:mireco5ba}
\end{equation}
where $\mathsf{A}_{0}$ is $2\times2$ matrix and $\mathsf{k}$ is
$2\times1$ matrix (vector) defined as follows: 
\begin{equation}
\mathsf{A}_{0}\overset{\mathrm{def}}{=}\left[\begin{array}{rr}
\frac{\rho h^{3}}{12}\omega^{2}-D\left(\frac{1-\nu}{2}k_{x}^{2}+k_{y}^{2}\right) & -\frac{D\left(1+\nu\right)}{2}k_{x}k_{y}\\
-\frac{D\left(1+\nu\right)}{2}k_{x}k_{y} & \frac{\rho h^{3}}{12}\omega^{2}-D\left(\frac{1-\nu}{2}k_{y}^{2}+k_{x}^{2}\right)
\end{array}\right],\quad\mathsf{k}\overset{\mathrm{def}}{=}\left[\begin{array}{r}
k_{y}\\
k_{x}
\end{array}\right],\label{eq:mireco5bb}
\end{equation}
\begin{gather}
C_{0}=\left[\begin{array}{rrr}
h\left(\rho\omega^{2}-\kappa Gk^{2}\right) & 0 & 0\\
0 & \frac{\rho h^{3}}{12}\omega^{2}-Dk^{2} & 0\\
0 & 0 & \frac{\rho h^{3}}{12}\omega^{2}-D\frac{\left(1-\nu\right)}{2}k^{2}
\end{array}\right],\label{eq:mireco5ea}
\end{gather}
The velocity dispersion equations (\ref{eq:mireco7a}) and (\ref{eq:mireco8c})
yield respectively the following non dispersive values of the characteristic
velocities when $b=0$: 
\begin{gather}
\mathrm{\mathcal{T}=span}\,\left\{ \tau_{1},\tau_{2}\right\} =\mathrm{span}\,\left\{ \left[\begin{array}{r}
0\\
0\\
1
\end{array}\right],\left[\begin{array}{r}
\frac{k_{x}}{k}\\
\frac{k_{y}}{k}\\
0
\end{array}\right]\right\} :\quad\sqrt{\kappa}c_{\mathrm{T}}=\sqrt{\frac{E\kappa}{2\rho\left(1+\nu\right)}},\quad c_{\mathrm{P}}=\sqrt{\frac{E}{\rho\left(1-\nu^{2}\right)}};\label{eq:mireco9a}\\
\tau_{3}=\left[\begin{array}{r}
-\frac{k_{x}}{k}\\
\frac{k_{y}}{k}\\
0
\end{array}\right]:\quad c_{\mathrm{T}}=\sqrt{\frac{E}{2\rho\left(1+\nu\right)}}.\label{eq:mireco9b}
\end{gather}
We remind that vector $\tau_{2}$ corresponds to longitudinal (irrotational,
dilational) mode of oscillations associated with velocity $c_{\mathrm{P}}$,
whereas vectors $\tau_{1}$ and$\tau_{3}$ represent transverse (equivoluminal,
distortional) modes of oscillations, having respectively velocities
$\sqrt{\kappa}c_{\mathrm{T}}$ and $c_{\mathrm{T}}$.

In the case of arbitrary $b$ vector $\tau_{3}$ represents oscillations
propagating at the speed satisfying the velocity dispersion equation
(\ref{eq:mireco8c}), whereas vectors in the invariant subspace $\mathrm{\mathcal{T}=span}\,\left\{ \tau_{1},\tau_{2}\right\} $
have characteristic velocities that satisfy the velocity dispersion
equation (\ref{eq:mireco7a}).

\subsubsection{Hybridization of modes}

\label{subsec:hybrid}

We analyze here the effect of coupling coefficient $b$ on dispersion
relations (\ref{eq:mireco6c}) and (\ref{eq:mireco6d}). We start
off with illustrating graphically the analytical developments of Section~\ref{subsec:midreis}
by plotting the dispersion relations (\ref{eq:mireco6c}) and (\ref{eq:mireco6d})
for three sets of data differing only in the value of the coupling
coefficient $b$: 
\begin{equation}
\rho=1,\quad h=1,\quad D=1,\quad\nu=\tfrac{1}{2},\quad\kappa=1,\quad G=1,\quad b=0,\quad b=0.1,\quad b=0.2.\label{eq:pLdisdat1a}
\end{equation}
The first set in (\ref{eq:pLdisdat1a}) has $b=0$ (no coupling),
while the other two sets have $b=0.1$ and $b=0.2$ respectively.
The dashed blue curve corresponds to $b=0$, the solid crimson curve
to $b=0.1$, and the solid dark green curve to $b=0.2$.

\paragraph{Dispersion relation $f(k,\omega)=0$.}

Figure~\ref{fig:plot_f} shows the dispersion relation $f(k,\omega)=0$,
namely 
\begin{equation}
\frac{\rho h^{3}}{12}\omega^{2}-D\frac{(1-\nu)}{2}k^{2}-b^{2}\kappa hG=0.\label{eq:pL_f}
\end{equation}
For $b=0$ this reduces to the straight-line pair $\omega=\pm\sqrt{6D(1-\nu)/(\rho h^{3})}\,k$
passing through the origin. For $b>0$ the coupling term $b^{2}\kappa hG$
shifts the branches away from the origin: the intercept at $k=0$
becomes $\omega_{0}=\pm b\sqrt{12\kappa G/(\rho h^{2})}$, so the
$f=0$ branches are lifted off the origin by the coupling.

\begin{figure}[h]
\begin{centering}
\includegraphics[scale=0.45]{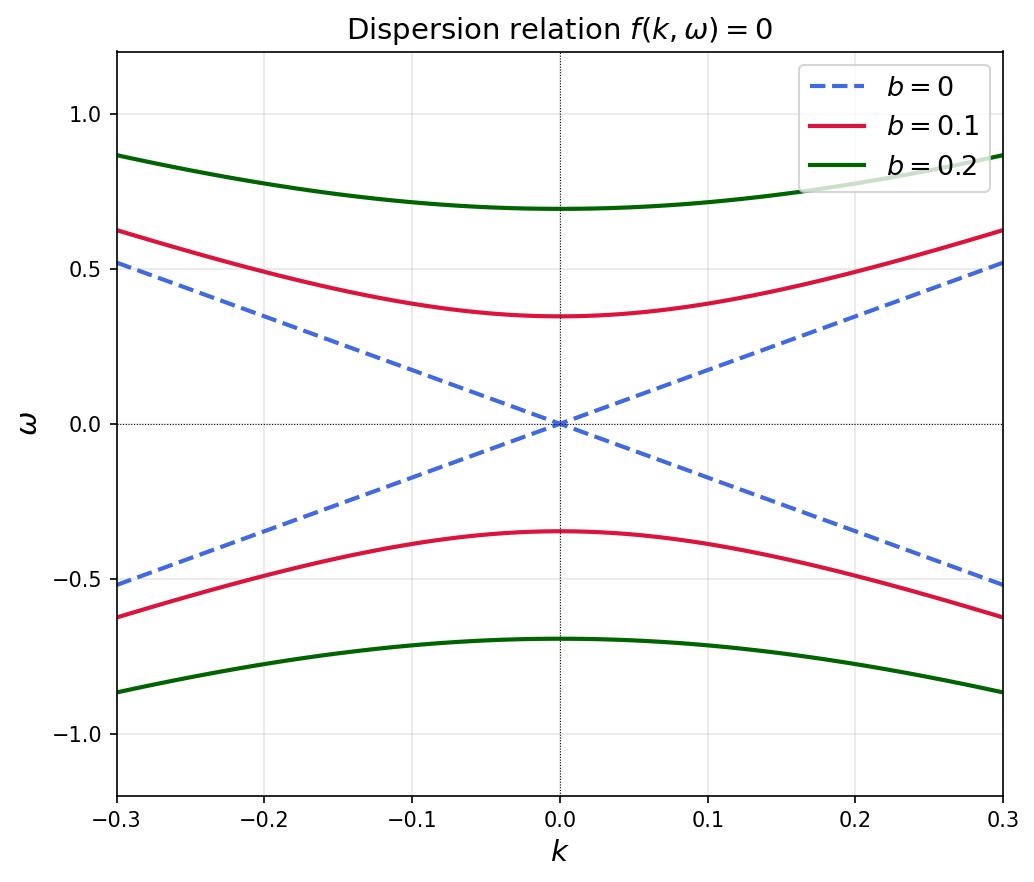} 
\par\end{centering}
\caption{ Dispersion relation $f(k,\omega)=0$ for data sets (\ref{eq:pLdisdat1a}).
Dashed blue: $b=0$; solid crimson: $b=0.1$; solid dark green: $b=0.2$.
The coupling lifts the solid branches off the origin.}
\label{fig:plot_f} 
\end{figure}

\paragraph{Dispersion relation $A(k,\omega)=0$.}

Figure~\ref{fig:plot_A} shows the dispersion relation $A(k,\omega)=0$,
namely 
\begin{equation}
\left(\frac{\rho h^{3}}{12}\omega^{2}-Dk^{2}\right)\left(\rho\omega^{2}-\kappa Gk^{2}\right)=b^{2}\kappa G\rho h\,\omega^{2},\label{eq:pL_A}
\end{equation}
together with a zoomed view near the origin. For $b=0$ the equation
factors into two pairs of straight lines through the origin: $\omega=\pm\sqrt{12D/(\rho h^{3})}\,k$
(from the first factor) and $\omega=\pm\sqrt{\kappa G/\rho}\,k$ (from
the second factor), giving four branches all pinned at $(k,\omega)=(0,0)$.
For $b>0$ the picture changes markedly: the two \emph{upper} branches
(larger $|\omega|$) are lifted off the origin, while the two \emph{lower}
branches (smaller $|\omega|$) appear to remain pinned. The zoomed
plot (right panel of Figure~\ref{fig:plot_A}) confirms that the
lower branches do indeed pass through the origin, approaching it parabolically
rather than linearly.

\begin{figure}[h]
\begin{centering}
\hspace{-0.5cm}\includegraphics[scale=0.45]{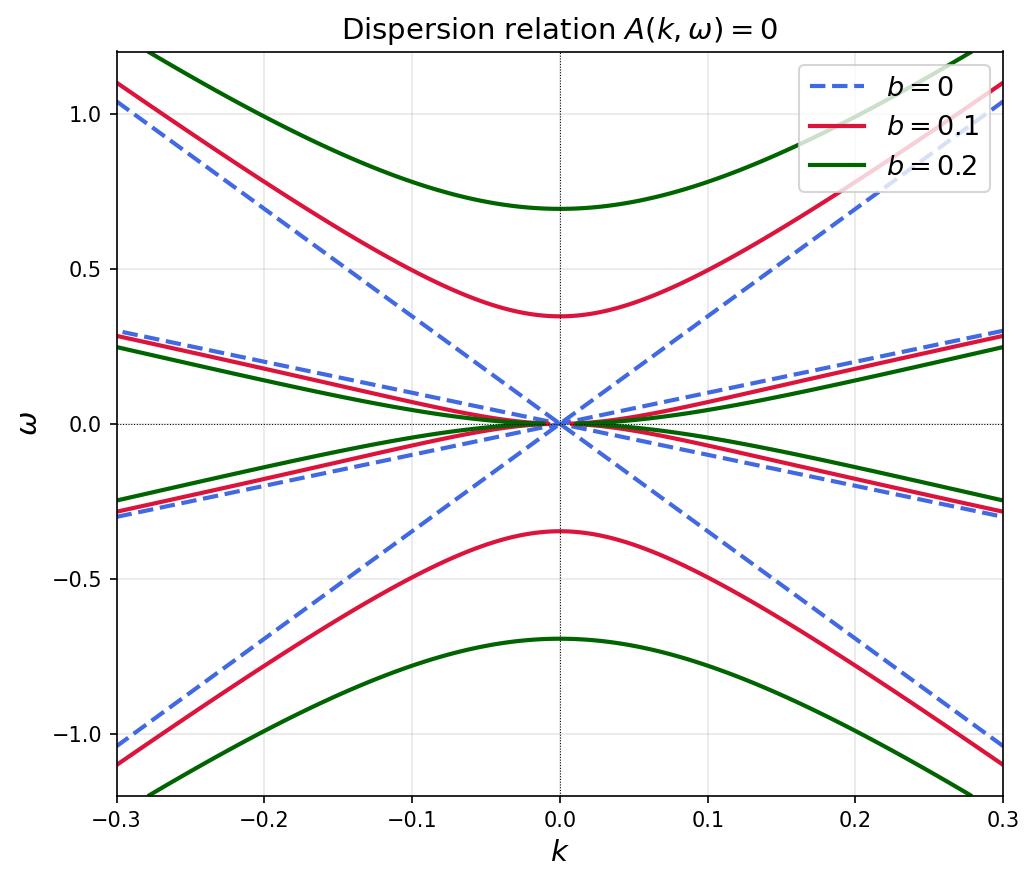}\hspace{0.5cm}\includegraphics[scale=0.45]{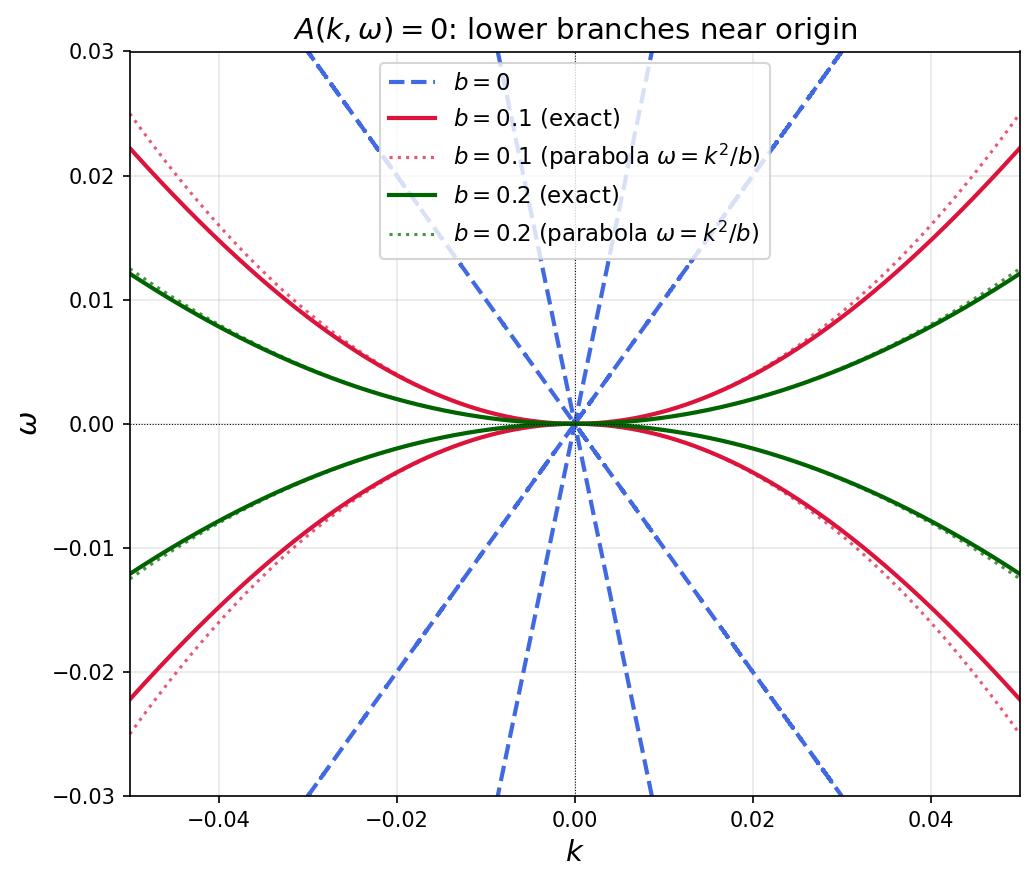} 
\par\end{centering}
(a)\hspace{7cm}(b)\caption{ Dispersion relation $A(k,\omega)=0$ for data set (\ref{eq:pLdisdat1a}).
Dashed blue: $b=0$; solid crimson: $b=0.1$; solid dark green: $b=0.2$.
(a)~Full view, $k\in(-0.3,0.3)$; (b)~zoomed view of lower branches
near origin, $k\in(-0.05,0.05)$, with dotted curves showing the parabolic
approximation $\omega=k^{2}/b$. The two upper branches are lifted
off the origin by coupling, while the two lower branches are pinned
at the origin with parabolic tangency.}
\label{fig:plot_A} 
\end{figure}

\paragraph{Asymptotic analysis: hybridization of modes.}

We now carry out an asymptotic analysis of $A(k,\omega)=0$ near the
origin to determine precisely which factors --- and hence which modes
--- govern each branch. Dividing equation~(\ref{eq:pL_A}) by $\omega^{4}$
and introducing $S=k^{2}/\omega^{2}$ we obtain the quadratic 
\begin{equation}
\kappa GD\cdot S^{2}-\left(\frac{\rho h^{3}\kappa G}{12}+\rho D\right)S+\frac{\rho^{2}h^{3}}{12}=\frac{b^{2}\kappa G\rho h}{\omega^{2}}.\label{eq:pL_Squad}
\end{equation}
Introducing the shorthand 
\begin{equation}
P=\frac{\rho h^{3}\kappa G}{12}+\rho D,\quad Q=\frac{\rho h^{3}\kappa G}{12}-\rho D,\quad R=2b\kappa G\sqrt{D\rho h},\label{eq:pL_PQR}
\end{equation}
the two solutions are 
\begin{equation}
S_{\pm}=\frac{P\pm\sqrt{Q^{2}+R^{2}/\omega^{2}}}{2\kappa GD}.\label{eq:pL_Spm}
\end{equation}
Note that $P$ and $Q$ involve parameters from \emph{both} factors:
$\rho D$ from the first and $\rho h^{3}\kappa G/12$ from the second,
while $R$ involves the coupling $b$ together with parameters from
both factors. Pulling out $1/|\omega|$ from the square root and expanding
$\sqrt{1+\varepsilon}=1+\frac{1}{2}\varepsilon-\frac{1}{8}\varepsilon^{2}+\cdots$
with $\varepsilon=Q^{2}\omega^{2}/R^{2}$, we obtain the Laurent series
for small $|\omega|$: 
\begin{equation}
S_{+}=\frac{R}{2\kappa GD}\cdot\frac{1}{|\omega|}+\frac{P}{2\kappa GD}+\frac{Q^{2}}{4\kappa GDR}\,|\omega|-\frac{Q^{4}}{16\kappa GDR^{3}}\,|\omega|^{3}+\cdots,\label{eq:pL_Splus}
\end{equation}
\begin{equation}
S_{-}=-\frac{R}{2\kappa GD}\cdot\frac{1}{|\omega|}+\frac{P}{2\kappa GD}-\frac{Q^{2}}{4\kappa GDR}\,|\omega|+\frac{Q^{4}}{16\kappa GDR^{3}}\,|\omega|^{3}+\cdots.\label{eq:pL_Sminus}
\end{equation}
Both series have been verified by direct substitution into the quadratic~(\ref{eq:pL_Squad}):
every coefficient from $|\omega|^{-2}$ through $|\omega|^{3}$ vanishes
identically.

The two series have opposite signs in the singular leading term, with
decisive physical consequences. Since $k^{2}=\omega^{2}S$, we analyze
each branch in turn:

\emph{$S_{+}$ branch.} The leading term $+R/(2\kappa GD|\omega|)>0$
dominates as $\omega\to0$, so $S_{+}>0$ for all small $\omega$
and $k^{2}=\omega^{2}S_{+}\sim R|\omega|/(2\kappa GD)\to0$: this
is the \emph{lower pinned branch}, approaching the origin parabolically.
Inverting $k^{2}=\omega^{2}S_{+}$ as a Puiseux series $\omega=c_{1}k^{2}+c_{2}k^{4}+c_{3}k^{6}+\cdots$
and solving order by order (verified by direct substitution into $A(k,\omega)=0$)
gives: 
\begin{gather}
c_{1}=\frac{\sqrt{D}}{b\sqrt{\rho h}},\quad c_{2}=-\frac{\sqrt{D}\,(12D+\kappa Gh^{3})}{24\,\kappa G\,b^{3}\,h^{3/2}\sqrt{\rho}},\quad c_{3}=\frac{\sqrt{D}\,(4D+\kappa Gh^{3})(36D+\kappa Gh^{3})}{384\,\kappa^{2}G^{2}\,b^{5}\,h^{5/2}\sqrt{\rho}},\label{eq:pL_c1}
\end{gather}
so that 
\begin{equation}
\omega=\pm\left(c_{1}k^{2}+c_{2}k^{4}+c_{3}k^{6}+\cdots\right).\label{eq:pL_omegaseries}
\end{equation}

\emph{$S_{-}$ branch.} The leading term $-R/(2\kappa GD|\omega|)<0$
dominates as $\omega\to0$, so $S_{-}<0$ for small $|\omega|$, giving
$k^{2}=\omega^{2}S_{-}<0$ --- no real $k$ exists. The branch is
absent near the origin and only becomes physical (real $k$) once
$|\omega|$ exceeds the threshold $\omega_{0}$ found by setting $k=0$
in~(\ref{eq:pL_A}) and dividing by $\omega^{2}\neq0$: 
\begin{equation}
\omega_{0}^{2}=\frac{12b^{2}\kappa G}{\rho h^{2}},\qquad\omega_{0}=\frac{2b\sqrt{3\kappa G}}{\sqrt{\rho}\,h}.\label{eq:pL_omega0}
\end{equation}
This is the \emph{upper lifted branch}. Expanding around $\omega_{0}$
as $\omega=\omega_{0}+d_{1}k^{2}+d_{2}k^{4}+\cdots$ and solving order
by order gives: 
\begin{gather}
d_{1}=\frac{\sqrt{3}\left(D+\frac{\kappa Gh^{3}}{12}\right)}{\sqrt{\kappa G}\,b\,h^{2}\sqrt{\rho}},\quad d_{2}=-\frac{\sqrt{3}\,(144D^{2}+72D\kappa Gh^{3}+\kappa^{2}G^{2}h^{6})}{576\,(\kappa G)^{3/2}\,b^{3}\,h^{3}\sqrt{\rho}},\label{eq:pL_d1}
\end{gather}
so that 
\begin{equation}
\omega=\pm\left(\omega_{0}+d_{1}k^{2}+d_{2}k^{4}+\cdots\right).\label{eq:pL_omegaupper}
\end{equation}

Several conclusions follow from the series (\ref{eq:pL_omegaseries})
and (\ref{eq:pL_omegaupper}).

\emph{Pinning confirmed.} The lower two branches~(\ref{eq:pL_omegaseries})
are pinned at the origin with parabolic tangency $\omega\sim c_{1}k^{2}$.
The upper two branches~(\ref{eq:pL_omegaupper}) are lifted off the
origin to $\pm\omega_{0}$, where $\omega_{0}\to0$ as $b\to0$: they
exist only because of the coupling.

\emph{Both modes contribute at every order.} In the lower branch~(\ref{eq:pL_omegaseries}),
the leading coefficient $c_{1}=\sqrt{D}/(b\sqrt{\rho h})$ involves
only $D$, $\rho$, $h$ from the \emph{first} factor, so the parabolic
curvature is set by the first mode alone. However, $c_{2}$ already
involves $\kappa G$ from the \emph{second} factor, and all higher
coefficients mix both. In the upper branch (\ref{eq:pL_omegaupper}),
the threshold $\omega_{0}$ involves $\kappa G$ (second factor) and
$\rho h$ (first factor), and every coefficient $d_{1},d_{2},\ldots$
mixes parameters from both factors.

\emph{Hybridization.} The coupling $b\neq0$ results in hybridization
of the two uncoupled modes in all four branches of $A(k,\omega)=0$:
the lower branches are governed to leading order by the first factor
but receive second-factor corrections at every higher order, while
the upper branches are a genuinely hybrid phenomenon whose very existence
requires the interaction of both modes.

\emph{Large-$|\omega|$ and large-$|k|$ asymptotics: recovery of
pure modes.} The quadratic~(\ref{eq:pL_Squad}) reveals an elegant
complementary picture in the opposite limit. As $|\omega|\to\infty$
the right-hand side $b^{2}\kappa G\rho h/\omega^{2}\to0$, which is
algebraically identical to setting $b=0$. The asymptotic equation
for $S$ is therefore simply 
\begin{equation}
\kappa GD\cdot S^{2}-P\cdot S+\frac{\rho^{2}h^{3}}{12}=0,\label{eq:pL_Sinf}
\end{equation}
with two roots 
\begin{equation}
S_{\pm}^{\infty}=\frac{P\pm|Q|}{2\kappa GD}=\begin{cases}
\dfrac{\rho}{\kappa G}, & (+)\\[6pt]
\dfrac{\rho h^{3}}{12D}, & (-)
\end{cases}\label{eq:pL_Sinf2}
\end{equation}
yielding the asymptotic slopes 
\begin{equation}
\omega\sim\pm\sqrt{\frac{\kappa G}{\rho}}\,k\quad\text{and}\quad\omega\sim\pm\sqrt{\frac{12D}{\rho h^{3}}}\,k,\quad|\omega|,|k|\to\infty.\label{eq:pL_asymp}
\end{equation}
These are precisely the slopes of the two uncoupled ($b=0$) straight-line
branches. Hence all four coupled branches are asymptotically straight
lines at large $|\omega|$ and $|k|$, with slopes entirely determined
by the individual factors --- the coupling term $b^{2}\kappa G\rho h/\omega^{2}$
decays as $1/\omega^{2}$ and becomes negligible. The hybridization
is therefore a \emph{low-frequency, small-wavenumber phenomenon}:
it is most pronounced near the origin and fades away as $|\omega|,|k|\to\infty$,
where each branch asymptotically recovers the identity of a single
pure mode.

\emph{Comparison with the cross-point model and growth of hybridization
with coupling.} The structural parallel between $A(k,\omega)=0$ and
the cross-point model~(\ref{eq:GGgamG2e}) is illuminating. In both
cases the factorized left-hand side is a product of the two individual
mode dispersion functions, and the right-hand side is the coupling
term. Setting the right-hand side to zero recovers the uncoupled straight-line
branches; any nonzero right-hand side forces every branch to carry
the imprint of both factors. The cross-point model is in fact the
local (linearized near the crossing) approximation to the general
story: it applies near $(0,0)$ in the Mindlin-Reissner case just
as it does near any cross-point $(\omega_{0},k_{0})$ in the general
factorized system.

The degree of hybridization grows monotonically with the coupling
and is directly readable from the plots. In Figure~\ref{fig:dis-crpo-new}
the coupled branches progressively depart from the dashed uncoupled
reference lines as $\gamma$ increases: the avoided-crossing gap widens
and the branches curve more strongly, mixing the two modes ever more
thoroughly. For large $|\kappa|$ and $|\delta|$ the coupled branches
visibly return to the reference lines, confirming the asymptotic recovery
of pure modes. The same tendency appears in Figure \ref{fig:plot_A}:
as $b$ increases from $0$ to $0.1$ to $0.2$, the upper branches
are lifted higher ($\omega_{0}\propto b$) and the parabolic lower
branches open more slowly ($\omega\sim k^{2}/b$, so the curvature
decreases with $b$) --- both are signatures of stronger hybridization
near the origin.

\emph{Contrasting directions of hybridization growth.} In the cross-point
model larger $\gamma$ always widens the avoided-crossing gap, while
in the Mindlin-Reissner model larger $b$ pushes the lower parabolic
branches \emph{closer} to the $k$-axis (smaller $\omega$ for fixed
$k$) --- a subtler but equally unambiguous signature of increased
mode mixing. In both cases the asymptotic straight-line behavior at
large $|\omega|$ and $|k|$ is independent of the coupling strength,
confirming that hybridization is confined to the neighborhood of the
cross-point.

\emph{Summary.} The factorized form $G_{1}G_{2}=\gamma G_{\mathrm{c}}$
makes mode hybridization not merely a qualitative statement but a
quantitatively precise one: the coupling parameter $b$ controls the
degree of mixing at every order of the asymptotic expansions, and
the deviation of each coupled branch from the uncoupled reference
curves provides a direct measure of hybridization that grows with
$b$ near the origin and vanishes asymptotically at large frequencies
and wavenumbers.

\subsubsection{Classical Kirchhoff's plate theory}

\label{subsubsec:KirchPT}

We concisely review here the classical Kirchhoff's plate theory which
is analogous of the Bernoulli-Euler beam theory. In the case of small
deflections its Lagrangian is, \citet[Sec. 5.1, 8.8]{Langh}, \citet[Sec. 6.2.2]{Magrab},
\citet[ Sec. 4.4]{GerRix} 
\begin{equation}
L=\frac{1}{2}\rho h\left(\partial_{t}w\right)^{2}+D\left(1-\nu\right)\left(\partial_{x}^{2}w\partial_{y}^{2}w-\left(\partial_{xy}^{2}w\right)^{2}\right)-\frac{1}{2}D\left(\partial_{x}^{2}w+\partial_{y}^{2}w\right)^{2},\label{eq:LagLag1a}
\end{equation}
where $w=w\left(x,y,t\right)$ is the plate deflection, $h$ is the
plate thickness, $\nu$ is \emph{Poisson's ratio},$\rho$ is the mass
density and $D$ is the \emph{flexural rigidity} defined by, \citet[Sec. 6.2.2]{Magrab},
\citet[Sec. 5.1]{Langh}, \citet[ Sec. 4.4.5]{GerRix} 
\begin{equation}
D=\frac{Eh^{3}}{12\left(1-\nu^{2}\right)}.\label{eq:LagLag1b}
\end{equation}

The Euler-Lagrange equation corresponding to Lagrangian $L$ defined
by equation (\ref{eq:LagLag1a}) is, \citet[Sec. 5.1, 8.8]{Langh},
\citet[Sec. 6.2.3]{Magrab}, \citet[ Sec. 4.4.9]{GerRix} 
\begin{gather}
\rho h\partial_{t}^{2}w+D\varDelta^{2}w=0,\quad\varDelta^{2}=\left(\partial_{x}^{2}+\partial_{y}^{2}\right)^{2}=\partial_{x}^{4}+2\partial_{x}^{2}\partial_{y}^{2}+\partial_{y}^{4}.\label{eq:LagLag1c}
\end{gather}
The fundamental differential equation (\ref{eq:LagLag1c}) in the
classical theory of vibration of plates was derived by Lagrange, \citet[Sec. 8.8]{Langh}.

Note that the second term of the Lagrangian $L$ in equation (\ref{eq:LagLag1a})
makes no contribution to the Euler-Lagrange equation (\ref{eq:LagLag1c}).

We briefly review here the theory of bending of plates following \citet[Chap. 1.1]{TimWoi}: 
\begin{quotation}
`` ...the simple problem of the bending of a long rectangular plate
that is subjected to a transverse load that does not vary along the
length of the plate. The deflected surface of a portion of such a
plate at a considerable distance from the ends can be assumed cylindrical,
with the axis of the cylinder parallel to the length of the plate.
We can therefore restrict ourselves to the investigation of the bending
of an elemental strip cut from the plate by two planes perpendicular
to the length of the plate and a unit distance (say 1 in.) apart.
The deflection of this strip is given by a differential equation which
is similar to the deflection equation of a bent beam. '' 
\end{quotation}
Suppose that the plate has uniform thickness $h$ and let $xy$ be
the middle plane of the plate before loading. Let the $y$-axis coincide
with one of the longitudinal edges of the plate and let the positive
direction of the $z$ axis be downward and the plate is bend downwards
under the load. Suppose that $w$ is the deflection of the plate in
the $z$ direction and $w$ is assumed to be small. Then the dispersion
relations that correspond to equations (\ref{eq:LagLag1c}) are, \citet[Sec. 4.2.3]{Graff}
\begin{equation}
\rho h\omega^{2}=Dk^{4},\quad k^{2}=k_{x}^{2}+k_{y}^{2}.\label{eq:LagLag1d}
\end{equation}
The Kirchhoff theory involves a single field $w$ and its dispersion
relation~(\ref{eq:LagLag1d}) has no two-subsystem factorized structure;
it serves here as the classical reference theory against which the
richer Mindlin-Reissner framework, with its coupling between the transverse
deflection $w$ and the rotational fields $\psi_{x}$, $\psi_{y}$,
is to be compared. We remark that asymptotic approaches to factorizing
plate dispersion relations, in a spirit related to the present work,
have been developed by Kaplunov and collaborators, see e.g.~\citet{KapNolRog},
\citet{KapNob}, \citet{ChebKapRog}, \citet{AlzKapPri}, where polynomial
approximations of the Rayleigh-Lamb and Mindlin plate dispersion relations
are derived that effectively isolate individual wave branches. The
present approach differs in that the factorization is achieved algebraically
through the Lagrangian coupling-parameter framework rather than by
asymptotic expansion in a small parameter.

\section{Cross-point model for factorized dispersion relations}

\label{sec:cross-po-model}

The cross-point model introduced by us in \citet{FigFDT1} is arguably
the simplest model illustrating the effect of coupling on the dispersion
relations of a system composed of two interacting subsystems. We provide
here a brief review of this model.

Let us assume that there are two initially non-interacting systems
with the dispersion relations defined by equations 
\begin{equation}
G_{1}\left(k,\omega\right)=0,\quad G_{2}\left(k,\omega\right)=0.\label{eq:GGgamG1a}
\end{equation}
Suppose then that the two systems are coupled and the dispersion relations
for this interacting system is of the following factorized form 
\begin{equation}
G_{1}\left(k,\omega\right)G_{2}\left(k,\omega\right)=\gamma G_{\mathrm{c}}\left(k,\omega\right),\label{eq:GGgamG1b}
\end{equation}
where $\gamma$ is the \emph{coupling coefficient} and we refer to
$G_{\mathrm{c}}\left(k,\omega\right)$ as the \emph{coupling function}.
We assume variables $k$ and $\omega$ be real-valued or complex-valued.

Suppose now that $\left(\omega_{0},k_{0}\right)$ is a \emph{``cross-point''}
of the graphs of functions $G_{1}$ and $G_{2}$, that a point satisfying
the two dispersion relations (\ref{eq:GGgamG1a}), namely 
\begin{equation}
G_{1}\left(\omega_{0},k_{0}\right)=0,\quad G_{2}\left(\omega_{0},k_{0}\right)=0.\label{eq:GGgamG1c}
\end{equation}
Suppose also coupling parameter $\gamma$ to be small, and consider
solutions to equation (\ref{eq:GGgamG1b}) in a small vicinity of
point $\left(\omega_{0},k_{0}\right)$, that is 
\begin{equation}
\left(k,\omega\right)=\left(k_{0}+\kappa,\omega_{0}+\delta\right),\quad\left|\delta\right|,\left|\kappa\right|\ll1,\label{eq:GGgamG1d}
\end{equation}
Assuming that equations (\ref{eq:GGgamG1c}) and (\ref{eq:GGgamG1d})
hold and that $|\delta|$, $|\kappa|$ are small, that is 
\begin{equation}
\left|\kappa\right|\ll1,\quad\left|\delta\right|\ll1,\label{eq:GGgamG1s}
\end{equation}
we arrive at the following \emph{principal approximation} to the dispersion
equation (\ref{eq:GGgamG1b}) 
\begin{equation}
\left(g_{1\omega}\delta+g_{1k}\kappa\right)\left(g_{2\omega}\delta+g_{2k}\kappa\right)=\gamma g_{\mathrm{c}},\label{eq:GGgamG2a}
\end{equation}
where the constants $g_{j\omega}$, $g_{jk}$ and $g_{\gamma}$ are
defined by 
\begin{equation}
g_{j\omega}=\left(\partial_{\omega}G_{j}\right)\left(\omega_{0},k_{0}\right),\quad g_{jk}=\left(\partial_{k}G_{j}\right)\left(\omega_{0},k_{0}\right),\quad j=1,2;\quad g_{\mathrm{c}}=G_{\mathrm{c}}\left(\omega_{0},k_{0}\right).\label{eq:GGgamG2b}
\end{equation}

For generic values of coefficients $g_{j\omega}$, $g_{jk}$ and $g_{\gamma}$
for which $g_{1\omega}\delta+g_{1k}\kappa$ and $g_{2\omega}\delta+g_{2k}\kappa$
are linearly independent, we can transform equations (\ref{eq:GGgamG2a})
into a simple special form by the following change of coordinates
\begin{equation}
g_{1\omega}\delta+g_{1k}\kappa=\delta^{\prime}+\kappa^{\prime},\quad g_{2\omega}\delta+g_{2k}\kappa=\delta^{\prime}-\kappa^{\prime}.\label{eq:GGgamG2c}
\end{equation}
Indeed equation (\ref{eq:GGgamG2a}) can be recast in terms of these
variables as 
\begin{equation}
\delta^{\prime2}-\kappa^{\prime2}=\gamma g_{\mathrm{c}}.\label{eq:GGgamG2d}
\end{equation}
Note now that the graph of equation (\ref{eq:GGgamG2d}) is a hyperbola
implying that the graph of original equation (\ref{eq:GGgamG2a})
is a linear transformation of the hyperbola associated with special
form (\ref{eq:GGgamG2d}).

\emph{In summary, we may conclude that generically the graph of the
dispersion relations of two interacting systems in a vicinity of the
relevant intersection point is a linear transformation of the hyperbola
if the coupling parameter $\gamma$ is small.}

In case when $g_{1\omega}\neq0$ and $g_{2\omega}\neq0$ we can divide
both sides of equation (\ref{eq:GGgamG2a}) by $g_{1\omega}g_{2\omega}$
obtaining the following equivalent equation 
\begin{equation}
\left(\delta+g_{1}\kappa\right)\left(\delta+g_{2}\kappa\right)=\gamma g_{\gamma},\quad g_{1}\stackrel{\mathrm{def}}{=}\frac{g_{1k}}{g_{1\omega}},\quad g_{2}\stackrel{\mathrm{def}}{=}\frac{g_{2k}}{g_{2\omega}},\quad g_{\gamma}\stackrel{\mathrm{def}}{=}\frac{g_{\mathrm{c}}}{g_{1\omega}g_{2\omega}}.\label{eq:GGgamG2e}
\end{equation}
We will refer to dispersion equation (\ref{eq:GGgamG2e}) as the \emph{cross-point
principle model dispersion relations}, and Figure~\ref{fig:dis-crpo-new}
shows the plots of these relations for $g_{1}=1$, $g_{2}=10$, $\gamma=0.4,2,4$,
and both signs of $g_{\gamma}$. In the figure: solid curves show
the dispersion curves for the indicated values of $\gamma$ (royal
blue: $\gamma=0.4$, crimson: $\gamma=2$, dark green: $\gamma=4$);
dashed blue straight lines show the uncoupled ($\gamma=0$) reference
curves. 
\begin{figure}[h]
\begin{centering}
\includegraphics[width=0.95\textwidth]{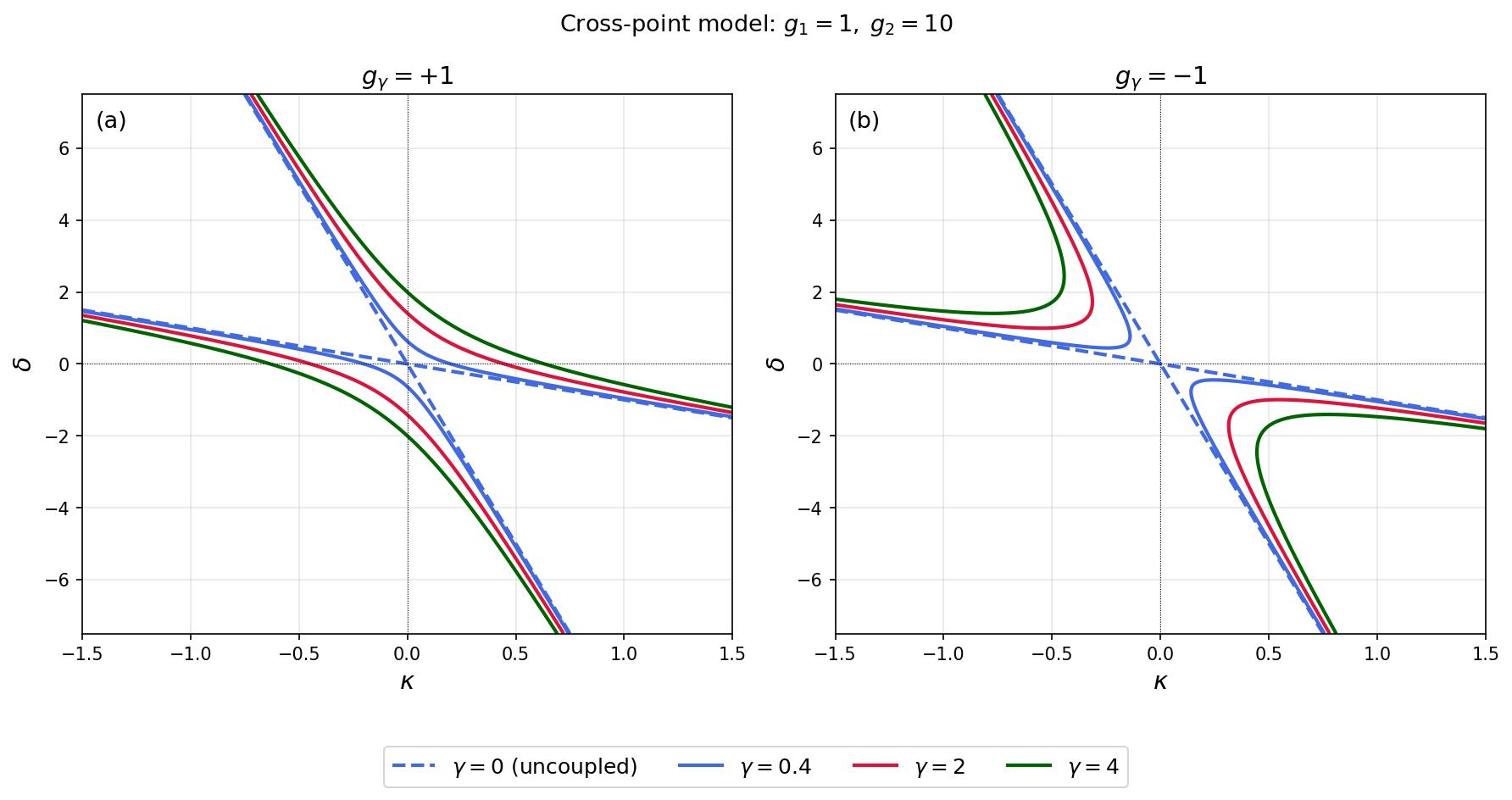} 
\par\end{centering}
\caption{ Cross-point principle model dispersion relations~(\ref{eq:GGgamG2e})
for $g_{1}=1$, $g_{2}=10$, $\gamma=0.4,2,4$: (a)~$g_{\gamma}=+1>0$;
(b)~$g_{\gamma}=-1<0$. Dashed blue: uncoupled reference lines ($\gamma=0$);
solid royal blue, crimson, dark green: coupled curves for $\gamma=0.4,2,4$
respectively. Curves farther from the dashed reference lines correspond
to larger values of $\gamma$.}
\label{fig:dis-crpo-new} 
\end{figure}

Note that coupled mode theory is yet another example that yields frequency
dependence on a parameter (detuning frequency) with graphical representation
\citet[Fig. 1]{HausHua} similar to Figure~\ref{fig:dis-crpo-new}.

The asymptotic behavior of the cross-point model is also worth noting.
For large $|\kappa|$ and $|\delta|$ the right-hand side $\gamma g_{\gamma}$
of~(\ref{eq:GGgamG2e}) becomes negligible compared to the left-hand
side, and the coupled branches asymptotically approach the uncoupled
reference lines $\delta=-g_{1}\kappa$ and $\delta=-g_{2}\kappa$.
More precisely, the deviation of each branch from the nearer reference
line decays as $O(1/\kappa)$ for large $|\kappa|$, since from~(\ref{eq:GGgamG2e})
the deviation $\Delta\delta$ satisfies $\Delta\delta\sim\gamma g_{\gamma}/((g_{2}-g_{1})\kappa)$.
This is the cross-point analog of the large-$|\omega|$ recovery of
pure modes established for $A(k,\omega)=0$ in Section~\ref{subsec:hybrid}
(equations~(\ref{eq:pL_Sinf})--(\ref{eq:pL_asymp})), where the
coupling term decays as $b^{2}/\omega^{2}$: in both models the coupling
term on the right-hand side of the factorized equation becomes relatively
small far from the cross-point, so the hybridization is spatially
concentrated near the crossing and the branches recover their individual
mode character at large $|\kappa|$, $|\delta|$ (or equivalently
large $|k|$, $|\omega|$). This behavior is clearly visible in Figure~\ref{fig:dis-crpo-new}:
all coupled branches, regardless of the value of $\gamma$, asymptotically
converge to the two dashed reference lines.

\subsection{Lagrangian framework for the cross-point model}

\label{subsec:Lag-crpo}

Let us consider the following general form of the dispersion function
and the corresponding dispersion relations 
\begin{equation}
G\left(k,\omega\right)\stackrel{\mathrm{def}}{=}A\omega^{2}-2\omega kB-Ck^{2}-D,\quad G\left(k,\omega\right)=0.\label{eq:GABC1a}
\end{equation}
The choice of signs before coefficients in equations (\ref{eq:GABC1a})
is motivated by its applications to the GTL, the e-beam and other
physical systems.

It is natural and important to ask if the dispersion relations (\ref{eq:GABC1a})
can be associated with a ``real physical system'', that is with
the Euler-Lagrange equations of a Lagrangian. The answer to this question
is positive, and an expression for such a Lagrangian $\mathcal{L}_{G}$
is as follows: 
\begin{equation}
\mathcal{L}_{G}\left(\partial_{t}Q,\partial_{z}Q,Q\right)\stackrel{\mathrm{def}}{=}\frac{Q^{2}}{2}G\left(\frac{\partial_{z}Q}{Q},\frac{\partial_{t}Q}{Q}\right)\equiv\frac{1}{2}\left[A\left(\partial_{t}Q\right)^{2}+2B\partial_{t}Q\partial_{z}Q-C\left(\partial_{z}Q\right)^{2}-DQ^{2}\right],\label{eq:GABC1b}
\end{equation}
where $Q=Q\left(z,t\right)$. Indeed, the EL equations for Lagrangian
$\mathcal{L}_{G}$ defined by equations (\ref{eq:GABC1b}) are 
\begin{equation}
\left[A\partial_{t}^{2}+2B\partial_{t}\partial_{z}-C\partial_{z}^{2}+D\right]Q=0.\label{eq:GABC1c}
\end{equation}
To find the dispersion relations associated with the EL equation (\ref{eq:GABC1c})
we proceed in the standard fashion and consider the system eigenmodes
of the form 
\begin{equation}
Q\left(z,t\right)=\hat{Q}\left(k,\omega\right)\mathrm{e}^{-\mathrm{i}\left(\omega t-kz\right)}.\label{eq:GABC1d}
\end{equation}
Plugging in expression (\ref{eq:GABC1c}) for $Q\left(z,t\right)$
in the EL equation (\ref{eq:GABC1c}) after elementary evaluations
we obtain 
\begin{equation}
\mathrm{e}^{-\mathrm{i}\left(\omega t-kz\right)}\hat{Q}\left(k,\omega\right)\left[-A\omega^{2}+2\omega kB+Ck^{2}+D\right]=0.\label{eq:GABC1e}
\end{equation}
Assuming naturally that $\hat{Q}\left(k,\omega\right)$ being an amplitude
of an eigenmode is not zero we recover from equation (\ref{eq:GABC1e})
the following dispersion relation associated with the EL equation
(\ref{eq:GABC1c}) 
\[
-A\omega^{2}+2\omega kB+Ck^{2}+D=0,
\]
which is evidently equivalent to the original dispersion relation
(\ref{eq:GABC1a}). Hence indeed the Lagrangian $\mathcal{L}_{G}$
defined by equation (\ref{eq:GABC1b}) yields indeed the EL equation
having the desired dispersion relation (\ref{eq:GABC1a}).

Motivated by the cross-point dispersion relations (\ref{eq:GGgamG2e})
we introduce cross-point dispersion relations 
\begin{equation}
G_{\mathrm{crp}}\left(k,\omega\right)\stackrel{\mathrm{def}}{=}\left(\omega+g_{1}k\right)\left(\omega+g_{2}k\right)-\gamma g_{\gamma},\quad G_{\mathrm{crp}}\left(k,\omega\right)=0.\label{eq:GABC2a}
\end{equation}
Then according to formula (\ref{eq:GABC1b}) the corresponding to
dispersion relations (\ref{eq:GABC2a}) Lagrangian $\mathcal{L}_{\mathrm{crp}}$
is of the form 
\begin{equation}
\mathcal{L}_{\mathrm{crp}}=\frac{1}{2}\left[\left(\partial_{t}Q+g_{1}\partial_{z}Q\right)\left(\partial_{t}Q+g_{2}\partial_{z}Q\right)-\gamma g_{\gamma}Q^{2}\right].\label{eq:GABC2b}
\end{equation}
The expression (\ref{eq:GABC2b}) can be also readily obtained from
the last expression of relations (\ref{eq:GABC1b}) by setting up
there the following values of coefficients: 
\begin{equation}
A=1,\quad B=\frac{g_{1}+g_{2}}{2},\quad C=-g_{1}g_{2},\quad D=\gamma g_{\gamma}.\label{eq:GABC2c}
\end{equation}

\subsection{Mechanical analog of the cross-point model}

\label{subsec:mech-crpo}

The cross-point dispersion relation~(\ref{eq:GGgamG2e}) describes
the behavior of two coupled continuum subsystems near a crossing point
in the $\left(\omega,k\right)$ plane. We present here a finite-dimensional
mechanical analog in which the wavenumber $k$ is replaced by a scalar
parameter $p$, yielding a system whose eigenfrequencies exhibit the
same factorized structure and crossing behavior as (\ref{eq:GGgamG2e}).
The construction is based on the coupled-oscillator framework of \citet[Sec. 6.1]{Likh},
modified to introduce a $p$-dependent Lagrangian.

Consider two harmonic oscillators with masses $m_{j}$ and $b$-dependent
spring constants 
\begin{equation}
\kappa_{j}\left(b\right)=\kappa_{j}-b\kappa,\quad j=1,2,\label{eq:kappab}
\end{equation}
where $\kappa_{j}$ are the bare spring constants, $\kappa>0$ is
a coupling spring constant, and $b\geq0$ is a dimensionless coupling
amplitude. The uncoupled Lagrangian is $L=L_{1}+L_{2}$ with 
\begin{equation}
L_{j}=\frac{m_{j}}{2}\dot{x}_{j}^{2}-\frac{\kappa_{j}\!\left(b\right)+p\alpha_{j}\kappa_{j}}{2}x_{j}^{2},\quad j=1,2,\label{eq:LagMech}
\end{equation}
where $p$ is a real parameter with $\left|p\right|\leq1/5$ and $\alpha_{j}$
are fixed dimensionless coefficients. The full Lagrangian includes
the physically meaningful relative-displacement coupling 
\begin{equation}
L=L_{1}+L_{2}+L_{\mathrm{int}},\qquad L_{\mathrm{int}}=-\frac{b\kappa}{2}\left(x_{1}-x_{2}\right)^{2}.\label{eq:LagMechFull}
\end{equation}
Expanding $L_{\mathrm{int}}$ and combining with~(\ref{eq:LagMech}),
the effective diagonal potential for oscillator $j$ is 
\begin{equation}
\frac{\kappa_{j}\!\left(b\right)+p\alpha_{j}\kappa_{j}+b\kappa}{2}x_{j}^{2}=\frac{\left(1+p\alpha_{j}\right)\kappa_{j}}{2}x_{j}^{2},\label{eq:cancel}
\end{equation}
where the $b$-dependent terms cancel exactly. The full Lagrangian
therefore reduces to 
\begin{equation}
L=\sum_{j=1}^{2}\left[\frac{m_{j}}{2}\dot{x}_{j}^{2}-\frac{\left(1+p\alpha_{j}\right)\kappa_{j}}{2}x_{j}^{2}\right]+b\kappa x_{1}x_{2},\label{eq:LagEff}
\end{equation}
which is equivalent to a system with bare spring constants $\left(1+p\alpha_{j}\right)\kappa_{j}$
and a purely off-diagonal coupling $b\kappa x_{1}x_{2}$. The Euler--Lagrange
equations of (\ref{eq:LagEff}) are 
\begin{align}
m_{1}\ddot{x}_{1}+\left(1+p\alpha_{1}\right)\kappa_{1}x_{1} & =b\kappa x_{2},\label{eq:eomMech1}\\
m_{2}\ddot{x}_{2}+\left(1+p\alpha_{2}\right)\kappa_{2}x_{2} & =b\kappa x_{1}.\label{eq:eomMech2}
\end{align}
Defining the $p$-dependent partial frequencies 
\begin{equation}
\widetilde{\Omega}_{j}^{2}\left(p\right)=\frac{\left(1+p\alpha_{j}\right)\kappa_{j}}{m_{j}},\quad j=1,2,\label{eq:partfreqMech}
\end{equation}
and seeking solutions $x_{j}=c_{j}e^{i\omega t}$, the characteristic
determinant of~(\ref{eq:eomMech1})--(\ref{eq:eomMech2}) yields
the \emph{factorized characteristic equation} 
\begin{equation}
\left(\omega^{2}-\widetilde{\Omega}_{1}^{2}\left(p\right)\right)\left(\omega^{2}-\widetilde{\Omega}_{2}^{2}\left(p\right)\right)=\frac{b^{2}\kappa^{2}}{m_{1}m_{2}},\label{eq:factMech}
\end{equation}
the mechanical analog of the cross-point dispersion relation~(\ref{eq:GGgamG2e}),
with the wavenumber $k$ replaced by the parameter $p$ and the coupling
coefficient $\gamma g_{\gamma}$ replaced by $b^{2}\kappa^{2}/\!\left(m_{1}m_{2}\right)$.
The two real eigenfrequency branches are 
\begin{equation}
\omega_{\pm}^{2}\!\left(p,b\right)=\frac{\widetilde{\Omega}_{1}^{2}+\widetilde{\Omega}_{2}^{2}}{2}\pm\sqrt{\frac{\left(\widetilde{\Omega}_{1}^{2}-\widetilde{\Omega}_{2}^{2}\right)^{2}}{4}+\frac{b^{2}\kappa^{2}}{m_{1}m_{2}}}.\label{eq:omegapmMech}
\end{equation}
Note that for $p=0$ equation (\ref{eq:omegapmMech}) in view of (\ref{eq:partfreqMech})
turns into 
\begin{equation}
\omega_{\pm}^{2}\!\left(0,b\right)=\frac{\Omega_{1}^{2}+\Omega_{2}^{2}}{2}\pm\sqrt{\frac{\left(\Omega_{1}^{2}-\Omega_{2}^{2}\right)^{2}}{4}+\frac{b^{2}\kappa^{2}}{m_{1}m_{2}}},\quad\Omega_{j}^{2}=\widetilde{\Omega}_{1}^{2}\left(0\right)=\frac{\kappa_{j}}{m_{j}},\quad j=1,2.\label{eq:omegapmMech1}
\end{equation}
Equation (\ref{eq:omegapmMech1}) in turn readily implies 
\begin{equation}
\omega_{\pm}^{2}\!\left(0,0\right)=\frac{\left(\Omega_{1}^{2}+\Omega_{2}^{2}\right)\pm\left|\Omega_{1}^{2}-\Omega_{2}^{2}\right|}{2},\quad\Omega_{j}^{2}=\frac{\kappa_{j}}{m_{j}},\quad j=1,2.\label{eq:omegapmMech2}
\end{equation}

Since in view of (\ref{eq:partfreqMech}) $\widetilde{\Omega}_{j}^{2}$
is independent of $b$, the partial-frequency crossing condition $\widetilde{\Omega}_{1}^{2}\left(p^{*}\right)=\widetilde{\Omega}_{2}^{2}\left(p^{*}\right)$
is likewise $b$-independent, and yields 
\begin{equation}
p^{*}=\frac{\kappa_{2}/m_{2}-\kappa_{1}/m_{1}}{\alpha_{1}\kappa_{1}/m_{1}-\alpha_{2}\kappa_{2}/m_{2}}.\label{eq:pstar}
\end{equation}
At $p=p^{*}$ both branches split symmetrically about the common value
$\Omega^{*}=\widetilde{\Omega}_{j}\left(p^{*}\right)$: 
\begin{equation}
\omega_{\pm}^{2}\left(p^{*},b\right)=\Omega^{*2}\pm\frac{b\kappa}{\sqrt{m_{1}m_{2}}},\label{eq:splitmech}
\end{equation}
so that neither branch is pinned to $\Omega^{*}$ for $b>0$. This
symmetric avoided crossing is the direct mechanical counterpart of
the hyperbolic geometry of the cross-point dispersion relation~(\ref{eq:GGgamG2d}).

To illustrate these results numerically we set 
\begin{equation}
m_{1}=m_{2}=1,\quad\kappa_{1}=1,\quad\kappa_{2}=\tfrac{6}{5},\quad\kappa=1,\quad\alpha_{1}=1,\quad\alpha_{2}=-1,\label{eq:dataMech}
\end{equation}
giving $p^{*}=1/11$ and $\Omega^{*}=\sqrt{12/11}\approx1.044$. Figure
\ref{fig:eigenfreq-mech} shows $\omega_{\pm}\left(p\right)$ over
the range $p\in\left(-1/20,\,23/100\right)$, chosen so that $p^{*}$
lies near the center, for four values of the coupling parameter $b=0,\,0.2,\,0.4,\,0.6$.
For $b=0$ the two branches cross at $p^{*}$; for $b>0$ the crossing
is replaced by an avoided crossing whose gap $\omega_{+}-\omega_{-}$
at $p^{*}$ grows with $b$, in precise analogy with the cross-point
dispersion relation~(\ref{eq:GGgamG2e}). 
\begin{figure}[h]
\centering
\includegraphics[scale=0.5]{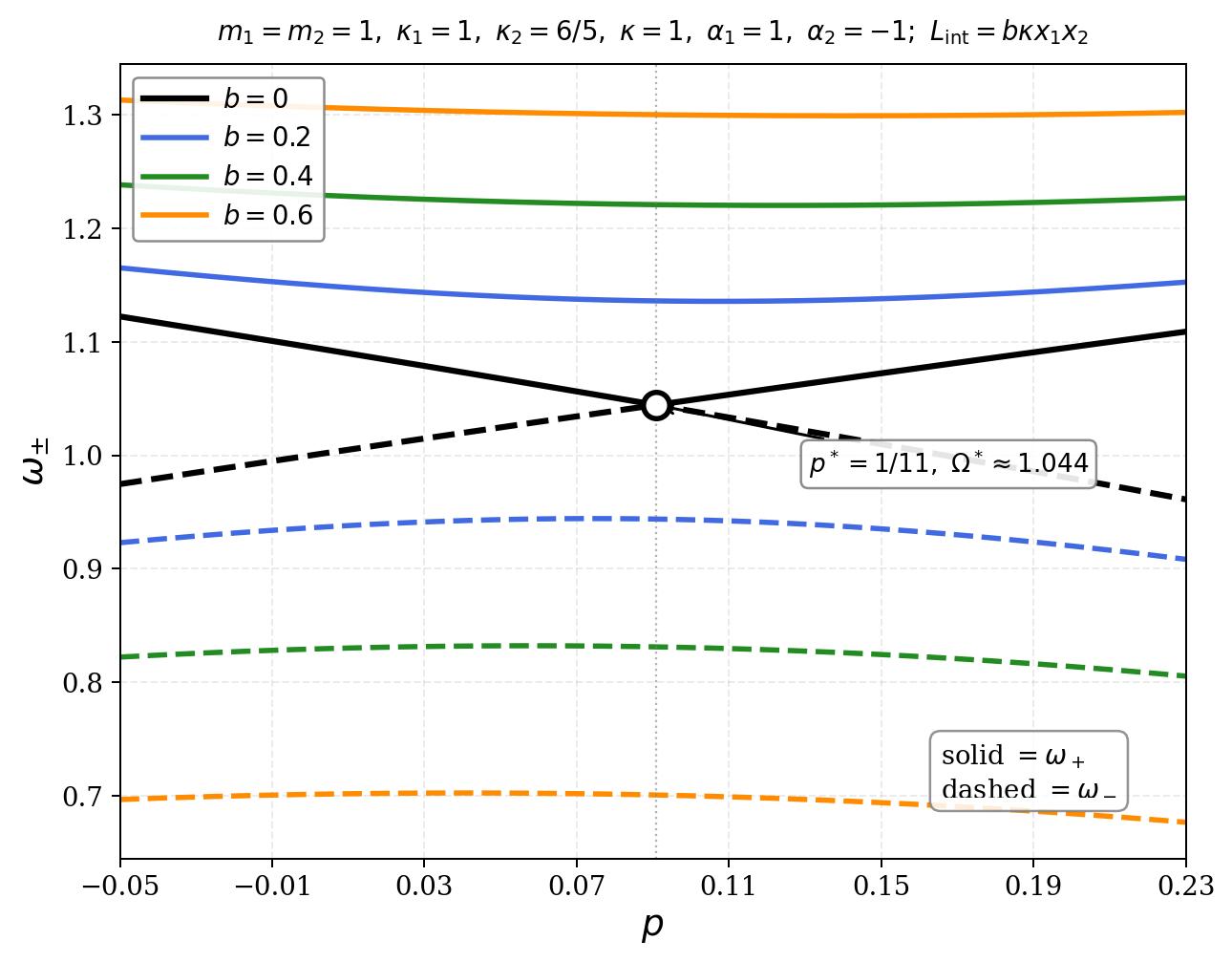} \caption{ Eigenfrequencies $\omega_{\pm}\left(p\right)$ for the parameter
set (\ref{eq:dataMech}) and $b=0$ (black), $0.2$ (blue), $0.4$
(green), $0.6$ (orange). Solid lines: upper branch $\omega_{+}$;
dashed lines: lower branch $\omega_{-}$. The open circle marks the
bare crossing point at $p^{*}=1/11$, $\omega=\Omega^{*}\approx1.044$.
The avoided crossing for $b>0$ is the mechanical analog of the cross-point
dispersion relation (\ref{eq:GGgamG2e}).}
\label{fig:eigenfreq-mech} 
\end{figure}

\textbf{\vspace{0.2cm}
 }

\textbf{ACKNOWLEDGMENT:} This research was supported by AFOSR MURI
Grant FA9550-20-1-0409 administered through the University of New
Mexico.

\textbf{\vspace{0.2cm}
 }

\section{Appendix}

\label{sec:append}

\subsection{Fourier transform}

\label{sec:four}

There are several common conventions for the Fourier transform, differing
in signs and constants. Our preferred form of the \emph{Fourier transform}
$\widehat{f}=f^{\land}$ of $f$ and the \emph{inverse Fourier transform}
$f^{\land}$ of $f$ follows to \citet[Sec. 1.1.7]{AdamHed}, \citet[Sec. 20.2]{ArfWeb},
\citet[Notations]{DauLio1}, \citet[Sec. 7.2, 7.5]{Foll}, \citet[Sec. 25]{TreB}:
\begin{gather}
\widehat{f}\left(k\right)\stackrel{\mathrm{def}}{=}\int_{-\infty}^{\infty}f\left(z\right)e^{-\mathrm{i}kz}\,dz,\quad f\left(z\right)=\left[\widehat{f}\left(k\right)\right]^{\lor}=\frac{1}{2\pi}\int_{-\infty}^{\infty}\widehat{f}\left(k\right)\mathrm{e}^{\mathrm{i}kz}\,\mathrm{d}k\label{eq:fouri1a}
\end{gather}
\begin{equation}
\widehat{f}\left(\omega\right)\stackrel{\mathrm{def}}{=}\int_{-\infty}^{\infty}f\left(t\right)e^{\mathrm{i}\omega t}\,\mathrm{d}t,\quad f\left(t\right)=\frac{1}{2\pi}\int_{-\infty}^{\infty}\widehat{f}\left(\omega\right)\mathrm{e}^{-\mathrm{i}\omega t}\,\mathrm{d}\omega,\label{eq:fouri1b}
\end{equation}
\begin{gather}
\widehat{f}\left(k,\omega\right)\stackrel{\mathrm{def}}{=}\int_{-\infty}^{\infty}f\left(z,t\right)e^{\mathrm{i}\left(\omega t-kz\right)}\,dz\mathrm{d}t,\label{eq:fouri1c}\\
f\left(z,t\right)=\left[\widehat{f}\left(k,\omega\right)\right]^{\lor}=\frac{1}{\left(2\pi\right)^{2}}\int_{-\infty}^{\infty}\widehat{f}\left(k,\omega\right)\mathrm{e}^{-\mathrm{i}\left(\omega t-kz\right)}\,\mathrm{d}k\mathrm{d}\omega.\nonumber 
\end{gather}
\emph{Note the difference of the choice of the sign for time $t$
and spatial variable $z$ in the above formula. It is motivated by
the desire to have ``wave'' form for exponential $\mathrm{e}^{-\mathrm{i}\left(\omega t-kz\right)}$
when both variables $t$ and $z$ are present}.

For multi-dimensional space variable $x\in\mathbb{R}^{n}$ the Fourier
transform $\widehat{f}$ of $f$ and the inverse Fourier transform
$f^{\land}$ of $f$ are defined by, \citet[Sec. 1.1.7]{AdamHed},
\citet[Notations]{DauLio1}, \citet[Sec. 7.5]{Foll}: 
\begin{equation}
\widehat{f}\left(k\right)\stackrel{\mathrm{def}}{=}\int_{\mathbb{R}^{n}}\widehat{f}\left(x\right)\mathrm{e}^{-\mathrm{i}k\cdot x}\,\mathrm{d}x,\quad f\left(x\right)=\left[\widehat{f}\left(k\right)\right]^{\lor}=\frac{1}{\left(2\pi\right)^{n}}\int_{\mathbb{R}^{n}}\widehat{f}\left(k\right)\mathrm{e}^{\mathrm{i}k\cdot x}\,\mathrm{d}k,\quad k,x\in\mathbb{R}^{n},\label{eq:fourier1d}
\end{equation}
which is consistent with equations (\ref{eq:fouri1a}). Then the Plancherel-Parseval
formula reads, \citet[Sec. 4.3.1]{Evans}, \citet[Sec. 7.5]{Foll},
\citet[Sec. 0.26]{FolPDE}: 
\begin{gather}
\left(f,g\right)=\left(2\pi\right)^{-n}\left(\widehat{f},\widehat{g}\right),\quad\left\Vert f\right\Vert =\left(2\pi\right)^{-n/2}\left\Vert \widehat{f}\right\Vert ,\label{eq:forier1e}\\
\left(f,g\right)\stackrel{\mathrm{def}}{=}\int_{\mathbb{R}^{n}}\overline{f\left(x\right)}g\left(x\right)\,\mathrm{d}x,\quad\left\Vert f\right\Vert \stackrel{\mathrm{def}}{=}\sqrt{\left(f,f\right)}.\nonumber 
\end{gather}

This preference was motivated by the fact that the so-defined Fourier
transform of the convolution of two functions has its simplest form.
Namely, the convolution $f\ast g$ of two functions $f$ and $g$
is defined by \citealt[Sec. 4.3.1]{Evans}, \citet[Sec. 7.2, 7.5]{Foll},
\begin{gather}
\left[f\ast g\right]\left(t\right)=\left[g\ast f\right]\left(t\right)=\int_{-\infty}^{\infty}f\left(t-t^{\prime}\right)g\left(t^{\prime}\right)\,\mathrm{d}t^{\prime},\label{eq:fourier2a}\\
\left[f\ast g\right]\left(z,t\right)=\left[g\ast f\right]\left(z,t\right)=\int_{-\infty}^{\infty}f\left(z-z^{\prime},t-t^{\prime}\right)g\left(z^{\prime},t^{\prime}\right)\,\mathrm{d}z^{\prime}\mathrm{d}t^{\prime}.\label{eq:fourier2b}
\end{gather}
Then its Fourier transform as defined by equations (\ref{eq:fouri1a})-(\ref{eq:fouri1c})
satisfies the following properties: 
\begin{gather}
\left[f\ast g\right]^{\land}\left(\omega\right)=\widehat{f}\left(\omega\right)\widehat{g}\left(\omega\right),\label{eq:fourier3a}\\
\left[f\ast g\right]^{\land}\left(k,\omega\right)=\widehat{f}\left(k,\omega\right)\widehat{g}\left(k,\omega\right).\label{eq:fourier3b}
\end{gather}

\subsection{A few facts about determinants}

\label{subsec:determ}

We present here a few important statements for determinants following
mostly \citet{ArnoODE}, \citet{BernS}, \citet{HorJohn}, \citet{PizOde}.
The theory of determinants is an important part of the linear algebra
and its geometric applications. Concepts of Grassmann exterior and
Clifford algebras give a deep insight into the properties of determinants,
\citet[Sec. 1.2, 3.3]{VeiDal}, \citet[Sec. 1.4]{HesSob}, \citet[Chap. 4]{SnyggN}.
In particular, according to \citet[Sec. 1.4]{HesSob}: 
\begin{quotation}
`` ... a determinant is nothing more nor less than the scalar product
of two blades.''. 
\end{quotation}
We introduce first basic notations. Let $M_{m,n}\left(\mathbb{F}\right)$
is a set of $m\times n$ matrices with entries in field $\mathbb{F}$.
We also use an abbreviation $M_{n}\left(\mathbb{F}\right)=M_{n,n}\left(\mathbb{F}\right)$.
To describe submatrices of a given matrix we introduce first index
sequences 
\begin{equation}
Q_{r,m}=\left\{ \left(i_{1},i_{2},\ldots,i_{r}\right)|1\leq i_{1}<i_{2}<\cdots<i_{r}\leq m\right\} ,\quad1\leq r\leq m.\label{eq:detQrm1a}
\end{equation}
Then if $A=\left\{ A_{ij}\right\} \in M_{m,n}\left(\mathbb{F}\right)$,
$\alpha\in Q_{r,m}$ and $\beta\in Q_{s,n}$ then $A\left[\alpha,\beta\right]\in M_{r,s}\left(\mathbb{F}\right)$
stands for a submatrix of $A$ with row indexes coming from $\alpha$
and column indexes coming from $\beta$. It useful to introduce also
a complimentary to $A\left[\alpha,\beta\right]\in M_{r,s}\left(\mathbb{F}\right)$
submatrix $A\left[\alpha^{c},\beta^{c}\right]\in M_{r,s}\left(\mathbb{F}\right)$
where $\alpha^{c}$ is the complimentary to $\alpha$ sequence, namely
\begin{equation}
\alpha^{c}=\left\{ 1,\ldots,m\right\} \setminus\alpha\in Q_{m-r,m},\label{eq:detQrm1b}
\end{equation}
or, in other words, $\alpha^{c}$ is obtained by removal sequence
$\alpha$ from sequence $\left\{ 1,\ldots,m\right\} $, and $\beta^{c}\in Q_{n-s,n}$
is defined similarly.

Suppose now $A=\left\{ A_{ij}\right\} \in M_{m,n}\left(\mathbb{F}\right)$
where $A_{ij}$, $1\leq i,j\leq n$ are the entries of matrix $A$.
For any pair $1\leq i,j\leq n$ we introduce a $\left(n-1\right)\times\left(n-1\right)$
submatrix $A\left[i^{c},j^{c}\right]$ obtained by deleting $i$-th
row and $j$-th column from A and refer to it \emph{cofactor} of $A_{ij}$.
We introduce also the so-called \emph{adjugate} to $A$ matrix $A^{\mathrm{A}}$
(sometimes called adjoint) defined using cofactors $A\left[i^{c},j^{c}\right]$
as follows, \citet[Sec. 3.8]{BernS}, \citet[App. C.3.3]{PizOde}:
\begin{equation}
\left[A^{\mathrm{A}}\right]_{i,j}\stackrel{\mathrm{def}}{=}\left(-1\right)^{i+j}\det\left\{ A\left[j^{c},i^{c}\right]\right\} ,\quad1\leq i,j\leq n.\label{eq:adjug1a}
\end{equation}
The adjugate matrix satisfy the following identities, \citet[Sec. 3.8, 3.19]{BernS},
\citet[App. C.3.3]{PizOde}: 
\begin{equation}
\sum_{m=1}^{n}A_{ij}\left[A^{\mathrm{A}}\right]_{m,i}=\left[AA^{\mathrm{A}}\right]_{i,i}=\left[A^{\mathrm{A}}A\right]_{i,i}=\det\left\{ A\right\} ,\label{eq:adjug1b}
\end{equation}
\begin{equation}
\sum_{m=1}^{n}A_{ij}\left[A^{\mathrm{A}}\right]_{m,j}=\left[AA^{\mathrm{A}}\right]_{i,j}=\left[A^{\mathrm{A}}A\right]_{i,j}=0,\quad i\neq j,\label{eq:adjug1c}
\end{equation}
\begin{equation}
A^{\mathrm{A}}A=AA^{\mathrm{A}}=\det\left\{ A\right\} \mathbb{I},\label{eq:adjug1d}
\end{equation}
where $\mathbb{I}$ is the identity matrix. Note in case when $A$
is not degenerate the identity (\ref{eq:adjug1d}) readily implies
the following representation: 
\begin{equation}
A^{\mathrm{A}}=\frac{1}{\det\left\{ A\right\} }A^{-1},\quad\det\left\{ A\right\} \neq0.\label{eq:adjug1e}
\end{equation}

\subsubsection{Laplace expansion}

The Laplace expansion represents the determinant of a square matrix
in terms of the product of determinants of certain submatrices. Here
is its main statement, \citet[Sec. 3.3]{VeiDal}, \citet[Sec. 1.4]{HesSob},
\citet[Sec. 0.8.9]{HorJohn}, \citet[App. C.3.3]{PizOde}. 
\begin{thm}[Laplace expansion theorem]
\label{thm:Lapexp}Let $A\in M_{n}\left(\mathbb{F}\right)$ and $\alpha\in Q_{r,n}$
for $1\leq r\leq n$ be fixed. Then the following Laplace expansion
of $\det\left\{ A\right\} $ by rows holds 
\begin{equation}
\det\left\{ A\right\} =\sum_{\beta\in Q_{r,n}}\left(-1\right)^{\left|\alpha\right|+\left|\beta\right|}\det\left\{ A\left[\alpha|\beta\right]\right\} \det\left\{ A\left[\alpha^{c},\beta^{c}\right]\right\} ,\label{eq:detQrm2a}
\end{equation}
where for $\alpha=\left(i_{1},i_{2},\ldots,i_{r}\right)$ quantity
$\left|\alpha\right|$ is defined by 
\begin{equation}
\left|\alpha\right|=\left|\left(i_{1},i_{2},\ldots,i_{r}\right)\right|=i_{1}+i_{2}+\cdots+i_{r}.\label{eq:detQrm2b}
\end{equation}
Similarly, if $\beta\in Q_{r,n}$ for $1\leq r\leq n$ is fixed, then
the following Laplace expansion of $\det\left\{ A\right\} $ by columns
holds 
\begin{equation}
\det\left\{ A\right\} =\sum_{\alpha\in Q_{r,n}}\left(-1\right)^{\left|\alpha\right|+\left|\beta\right|}\det\left\{ A\left[\alpha|\beta\right]\right\} \det\left\{ A\left[\alpha^{c},\beta^{c}\right]\right\} .\label{eq:detQrm2c}
\end{equation}
\end{thm}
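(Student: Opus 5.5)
The statement to prove is the Laplace expansion theorem (Theorem~\ref{thm:Lapexp}). I would derive it from the Leibniz formula $\det\{A\}=\sum_{\sigma\in S_{n}}\mathrm{sgn}(\sigma)\prod_{i=1}^{n}A_{i,\sigma(i)}$ by regrouping permutations according to where they send the fixed row set $\alpha=(i_{1},\ldots,i_{r})$.

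The key observation is that every permutation $\sigma\in S_{n}$ determines a unique $\beta\in Q_{r,n}$: list the set $\sigma(\alpha)=\{\sigma(i_{1}),\ldots,\sigma(i_{r})\}$ in increasing order. Hence $S_{n}$ is the disjoint union over $\beta\in Q_{r,n}$ of the classes $S_{\beta}=\{\sigma:\sigma(\alpha)=\beta\}$. Each $\sigma\in S_{\beta}$ corresponds bijectively to a pair $(\tau,\tau')$, where $\tau$ is a bijection $\alpha\to\beta$ and $\tau'$ is a bijection $\alpha^{c}\to\beta^{c}$. Identifying these with permutations in $S_{r}$ and $S_{n-r}$ via the order-preserving labellings of the index sets, the product $\prod_{i}A_{i,\sigma(i)}$ splits as $\prod_{i\in\alpha}A_{i,\tau(i)}\cdot\prod_{i\in\alpha^{c}}A_{i,\tau'(i)}$. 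Summing over $\tau$ and $\tau'$ then gives $\det\{A[\alpha|\beta]\}\det\{A[\alpha^{c},\beta^{c}]\}$, provided the signs factor correctly.

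The main step, and the only real obstacle, is the sign identity
\[
\mathrm{sgn}(\sigma)=(-1)^{|\alpha|+|\beta|}\,\mathrm{sgn}(\tau)\,\mathrm{sgn}(\tau').
\]
I would prove it by writing $\sigma=\pi_{\beta}^{-1}\circ(\tau\oplus\tau')\circ\pi_{\alpha}$. Here $\pi_{\alpha}$ is the "shuffle" permutation that moves the rows in $\alpha$, in order, to positions $1,\ldots,r$ and the rows in $\alpha^{c}$, in order, to positions $r+1,\ldots,n$; $\pi_{\beta}$ is defined similarly for columns; and $\tau\oplus\tau'$ is the block permutation. Clearly $\mathrm{sgn}(\tau\oplus\tau')=\mathrm{sgn}(\tau)\mathrm{sgn}(\tau')$. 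To compute $\mathrm{sgn}(\pi_{\alpha})$, note that $\pi_{\alpha}$ can be realized by adjacent transpositions: move row $i_{1}$ up $i_{1}-1$ places, then row $i_{2}$ up $i_{2}-2$ places, and so on. This takes $\sum_{k}(i_{k}-k)=|\alpha|-r(r+1)/2$ transpositions. The same count holds for $\beta$, so the $r(r+1)/2$ terms cancel in the product of the two signs, leaving $(-1)^{|\alpha|+|\beta|}$.

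Summing over $\beta$ yields the row expansion (\ref{eq:detQrm2a}). The column expansion (\ref{eq:detQrm2c}) follows by applying the row expansion to $A^{\mathrm{T}}$, using $\det\{A^{\mathrm{T}}\}=\det\{A\}$ and $(A^{\mathrm{T}})[\beta|\alpha]=(A[\alpha|\beta])^{\mathrm{T}}$.
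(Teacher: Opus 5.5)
Your proof is correct and complete. The partition of $S_{n}$ by $\beta=\sigma(\alpha)$, the factorization $\sigma=\pi_{\beta}^{-1}\circ(\tau\oplus\tau')\circ\pi_{\alpha}$, and the inversion count $\sum_{k}(i_{k}-k)=|\alpha|-r(r+1)/2$ are exactly what is needed. The $r(r+1)/2$ parts of that count cancel between the row and column shuffles, and the transpose argument for the column version is fine.

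The paper does not prove this theorem. It only quotes it from Horn--Johnson and from the Grassmann/Clifford-algebra references it cites. In that setting the natural proof runs as follows. Write $a_{1}\wedge\cdots\wedge a_{n}=\det\{A\}\,e_{1}\wedge\cdots\wedge e_{n}$ for the rows $a_{i}$, and move the rows indexed by $\alpha$ to the front. Then expand $a_{i_{1}}\wedge\cdots\wedge a_{i_{r}}=\sum_{\beta}\det\{A[\alpha|\beta]\}\,e_{\beta}$ and the complementary wedge the same way, using $e_{\beta}\wedge e_{\beta'}=0$ unless $\beta'=\beta^{c}$.

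That route replaces your explicit bijection $S_{\beta}\leftrightarrow S_{r}\times S_{n-r}$ with multilinearity and antisymmetry. The sign still comes from the same two shuffle counts you computed: reordering the rows, and reordering $e_{\beta}\wedge e_{\beta^{c}}$. So the two arguments are essentially one proof in different packaging. Yours has the advantage of being self-contained from the Leibniz formula.

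One small thing to state explicitly: for $r=n$, you need the convention that the determinant of the empty $0\times0$ complementary submatrix equals $1$.
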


\subsubsection{The Liouville-Jacobi formula}

Suppose that $A(t)$ and $M(t)$ is $n\times n$ matrices satisfying
the following Cauchy problem 
\begin{equation}
\frac{dM(t)}{dt}=A(t)M(t),\quad M(0)=\mathbb{I},\label{eq:LiJac1a}
\end{equation}
where $A(t)$ is $n\times n$ matrix. Then the following \emph{Liouville-Jacobi
formula} holds, \citet[Sec. 27.6]{ArnoODE}, \citet[Sec. XIV.1]{GantMa2},
\citet[Sec. II.1.2]{YakSta}: 
\begin{equation}
\det\left\{ M(t)\right\} =\exp\left[\intop_{0}^{t}\mathrm{Tr}\,\left\{ A(\tau)\right\} \,\mathrm{d}\tau\right].\label{eq:LiJac1b}
\end{equation}

In the case when $A(t)=A$ is a constant matrix the Liouville-Jacobi
formula readily yields the following identity, \citet[Sec. 16.3, 16.4, 27.6]{ArnoODE},
\citet[Section 15.2]{BernS} 
\begin{equation}
\det\left\{ \exp\left[A\right]\right\} =\exp\left[A\right].\label{eq:LiJac1c}
\end{equation}

The following statement holds, \citet[Sec. XIII.16 Lemma 6]{ReeSim4} 
\begin{lem}
For any matrix $A=\left\{ a_{ij}\right\} $ and $\tau$ small 
\[
\det\left\{ \mathbb{I}+\tau A\right\} =\exp\left\{ -\sum_{m=1}^{\infty}\tau^{k}\frac{\mathrm{Tr\,}\left(-A\right)^{k}}{k}\right\} ,
\]
where $\mathbb{I}$ is the identity matrix and $\mathrm{Tr\,}\left(A\right)=\sum_{i=1}^{n}a_{ii}$
is the trace of matrix $A$, that is the sum of its diagonal entries.

In particular, 
\[
\det\left\{ \mathbb{I}+\tau A\right\} =1+\tau\mathrm{Tr\,}\left(A\right)+O\left(\tau^{2}\right),\quad\tau\rightarrow0,
\]
. 
\end{lem}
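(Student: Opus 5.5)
The plan is to reduce the determinant identity to a scalar identity for the eigenvalues of $A$. Work over $\mathbb{C}$ and let $\lambda_{1},\ldots,\lambda_{n}$ be the eigenvalues of $A$, counted with algebraic multiplicity. By Schur triangularization $A=UTU^{-1}$ with $T$ upper triangular and diagonal $(\lambda_{1},\ldots,\lambda_{n})$. Then $\mathbb{I}+\tau A=U(\mathbb{I}+\tau T)U^{-1}$, and $\mathbb{I}+\tau T$ is upper triangular, so
\[
\det\left\{ \mathbb{I}+\tau A\right\} =\prod_{i=1}^{n}\left(1+\tau\lambda_{i}\right).
\]
In the same way, $A^{k}=UT^{k}U^{-1}$ with $T^{k}$ triangular with diagonal entries $\lambda_{i}^{k}$. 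Hence $\mathrm{Tr}\,(-A)^{k}=\sum_{i}(-\lambda_{i})^{k}$. (I read the summation index in the statement as $k$; the ``$m$'' is a typo.)

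Next, \emph{small $\tau$} will mean $|\tau|\,\rho(A)<1$, where $\rho(A)=\max_{i}|\lambda_{i}|$; for instance $|\tau|\,\|A\|<1$ in any operator norm suffices. For each factor use the principal logarithm and its convergent series:
\[
\log(1+\tau\lambda_{i})=\sum_{k=1}^{\infty}\frac{(-1)^{k+1}\tau^{k}\lambda_{i}^{k}}{k}=-\sum_{k=1}^{\infty}\tau^{k}\frac{(-\lambda_{i})^{k}}{k},
\]
so that $1+\tau\lambda_{i}=\exp\{-\sum_{k}\tau^{k}(-\lambda_{i})^{k}/k\}$. This sidesteps any branch issue, because we only ever exponentiate, and $\exp$ turns sums into products. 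Multiply over $i$. For each fixed $i$ the series converges absolutely, so summing over the finitely many $i$ may be interchanged with the sum over $k$. The exponent becomes $-\sum_{k}\tau^{k}\,\mathrm{Tr}\,(-A)^{k}/k$, which is the claimed formula. As a check, $|\mathrm{Tr}\,(-A)^{k}|\leq n\rho(A)^{k}$, so the resulting series converges absolutely on the stated disc.

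For the asymptotic consequence, isolate the $k=1$ term of the exponent, which is $-\tau\,\mathrm{Tr}\,(-A)=\tau\,\mathrm{Tr}\,(A)$. The tail is bounded by $n\sum_{k\geq2}(|\tau|\rho)^{k}/k=O(\tau^{2})$. Expanding $\exp(x)=1+x+O(x^{2})$ with $x=\tau\,\mathrm{Tr}\,A+O(\tau^{2})$ gives $1+\tau\,\mathrm{Tr}\,(A)+O(\tau^{2})$.

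The only delicate step is the logarithm: taking a matrix logarithm of $\mathbb{I}+\tau A$ directly would need more care. Factoring through the eigenvalues removes that, and the Liouville--Jacobi formula is not needed. The first-order claim can also be checked directly by expanding the determinant by the Leibniz formula: only the identity permutation contributes at order $\tau$, and its contribution is $\tau\sum_{i}a_{ii}$.
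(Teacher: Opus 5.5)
Your proof is correct. Your reading of the statement is also the intended one: the summation index is $k$, not $m$, and $\mathrm{Tr}\,(-A)^{k}$ means $\mathrm{Tr}\,\bigl((-A)^{k}\bigr)$.

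The paper gives no proof of its own. It quotes the lemma from Reed--Simon and places it right after the Liouville--Jacobi formula and its corollary $\det\{e^{A}\}=e^{\mathrm{Tr}\,A}$ (misprinted there as $\exp[A]$). That placement suggests a basis-free route, which is different from yours. For $|\tau|\,\|A\|<1$ put $B(\tau)=-\sum_{k\geq1}(-\tau A)^{k}/k$. Then $e^{B(\tau)}=\mathbb{I}+\tau A$, because $\exp\log(1+z)=1+z$ is an identity of absolutely convergent power series and everything here is a power series in the single matrix $\tau A$. This gives $\det\{\mathbb{I}+\tau A\}=\det\{e^{B(\tau)}\}=e^{\mathrm{Tr}\,B(\tau)}$, with the trace passed through the norm-convergent series.

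You instead triangularize and apply the scalar logarithm one eigenvalue at a time. This costs only Schur's theorem. In return it removes the matrix logarithm, the matrix identity $e^{\log(\mathbb{I}+X)}=\mathbb{I}+X$, and Liouville--Jacobi altogether. It also gives the natural disc $|\tau|\,\rho(A)<1$ directly, whereas the matrix series is easiest to control on $|\tau|\,\|A\|<1$ and needs a spectral-radius argument to do better.

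The other route never mentions eigenvalues. So it fits the paper's Liouville--Jacobi framework, and it adapts more readily to trace-class operators and Fredholm determinants. There your step $\mathrm{Tr}\,A^{k}=\sum_{i}\lambda_{i}^{k}$ is no longer a triangularization but a Lidskii-type theorem.

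Your closing Leibniz-formula check of the first-order term is a good independent confirmation, since it needs no convergence discussion at all.
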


\subsubsection{Determinant of the sum of two matrices}

\label{subsec:detApB}

Using notations for matrices and submatrices introduced in the beginning
of Section \ref{subsec:determ} we write the following formula for
the determinant of the sum of two $n\times n$ matrices with $n\geq2$
which due to Markus, \citet{Markus} 
\begin{equation}
\det\left\{ A+B\right\} =\det\left\{ A\right\} +\det\left\{ B\right\} +\sum_{r=1}^{n-1}\sum_{\alpha,\beta\in Q_{r,n}}\left(-1\right)^{\left|\alpha\right|+\left|\beta\right|}\det\left\{ A\left[\alpha|\beta\right]\right\} \det\left\{ B\left[\alpha^{c},\beta^{c}\right]\right\} ,\label{eq:detApB1}
\end{equation}

\textbf{\vspace{0.2cm}
 }

\textbf{DATA AVAILABILITY:} The data that support the findings of
this study are available within the article.\textbf{\vspace{0.2cm}
 }

\end{document}